\newtheorem{observation}{Observation}
\newcommand{\F}{\ensuremath{\mathcal{F}}\xspace}
\newcommand{\N}{\mathbb{N}}
\newcommand{\Oh}{\mathcal{O}}
\newcommand{\yes}{\textsc{yes}\xspace}
\newcommand{\no}{\textsc{no}\xspace}
\newcommand{\NP}{\ensuremath{\mathsf{NP}}\xspace}
\renewcommand{\P}{\ensuremath{\mathsf{P}}\xspace}
\newcommand{\halt}{\mathbin{\rangle\mkern-4mu|}}
\newcommand{\resume}{\mathbin{|\mkern-4mu\langle}}
\newcommand{\dHSk}{$d$-\textsc{Hitting Set($k$)}\xspace}
\newcommand{\dSMk}{$d$-\textsc{Set Matching($k$)}\xspace}
\newcommand{\EDSk}{\textsc{Edge Dominating Set($k$)}\xspace}
\newcommand{\CEk}{\textsc{Cluster Editing($k$)}\xspace}
\newcommand{\MFIk}{\textsc{Minimum Fill-In($k$)}\xspace}
\newcommand{\FBVSk}{\textsc{Feedback Vertex Set($k$)}\xspace}
\newcommand{\OCTk}{\textsc{Odd Cycle Transversal($k$)}\xspace}
\newcommand{\CVDk}{\textsc{Cluster Vertex Deletion($k$)}\xspace}
\newcommand{\CEDk}{\textsc{Cluster Deletion($k$)}\xspace}
\newcommand{\BERk}{\textsc{Edge Bipartization($k$)}\xspace}
\newcommand{\BCNk}{\textsc{Bipartite Colorful Neighborhood($k$)}\xspace}
\newcommand{\CoVDk}{\textsc{Cograph Vertex Deletion($k$)}\xspace}
\newcommand{\TEDk}{\textsc{Triangle Edge Deletion($k$)}\xspace}
\newcommand{\TVDk}{\textsc{Triangle Vertex Deletion($k$)}\xspace}
\newcommand{\TPk}{\textsc{Triangle Packing($k$)}\xspace}
\newcommand{\sSPk}{$s$-\textsc{Star Packing($k$)}\xspace}
\newcommand{\introduceparameterizedproblem}[4]{
\fbox{
\parbox{0.96\textwidth}{
	#1\hfill \textbf{Parameter:} #3\\
	\textbf{Input:} #2\\	
	\textbf{Question:} #4}}
\vspace{0.3cm}	
}
\title{\texorpdfstring{Streaming Kernelization\thanks{Supported by the Emmy Noether-program of the DFG, KR 4286/1.}}{Streaming Kernelization}} 
\author{Stefan Fafianie \and Stefan Kratsch}
\institute{TU Berlin, Germany, \texttt{$\{$stefan.fafianie,stefan.kratsch$\}$@tu-berlin.de}}
\begin{document}
\maketitle

\begin{abstract}
Kernelization is a formalization of preprocessing for combinatorially hard problems. We modify the standard definition for kernelization, which allows any polynomial-time algorithm for the preprocessing, by requiring instead that the preprocessing runs in a streaming setting and uses $\Oh(poly(k)\log|x|)$ bits of memory on instances $(x,k)$. We obtain several results in this new setting, depending on the number of passes over the input that such a streaming kernelization is allowed to make. \textsc{Edge Dominating Set} turns out as an interesting example because it has no single-pass kernelization but two passes over the input suffice to match the bounds of the best standard kernelization. 

\end{abstract}

\section{Introduction}

When faced with an \NP-hard problem we do not expect to find an efficient algorithm that solves every instance exactly and in polynomial time (as this would imply \P~$=$~\NP). The study of algorithmic techniques offers various paradigms for coping with this situation if we are willing to compromise on efficiency, exactness, or the generality of being applicable to all instances (or several of those). Before we commit to such a compromise it is natural to see how much closer we can come to a solution by spending only polynomial time, i.e., how much we can simplify and shrink the instance by polynomial-time \emph{preprocessing}. This is usually compatible with any way of solving the simplified instance and it finds wide application in practice (e.g., as a part of ILP solvers like CPLEX), although, typically, the applications are of a heuristic flavor with no guarantees for the size of the simplified instance or the amount of simplification.

The notion of \emph{kernelization} is one way of formally capturing preprocessing. A kernelization algorithm applied to some problem instance takes polynomial time in the input size and always returns an equivalent instance (i.e., the instances will have the same answer) of size bounded by a function of some \emph{problem-specific parameter}. For example, the problem of testing whether a given graph~$G$ has a vertex cover of size at most~$k$ can be efficiently reduced to an equivalent instance~$(G',k)$ where~$G'$ has~$\Oh(k)$ vertices and~$\Oh(k^2)$ total bit size. The study of kernelization is a vibrant field that has seen a wealth of new techniques and results over the last decade. (The interested reader is referred to recent surveys by Lokshtanov et al.~\cite{LokshtanovMS12} and Misra et al.~\cite{MisraRS11}.) In particular, a wide-range of problems is already classified into admitting or not admitting\footnote{Unless $\mathsf{NP\subseteq coNP/poly}$ and the polynomial hierarchy collapses.} a \emph{
polynomial} kernelization, where the guaranteed output size bound is polynomial in the chosen parameter.
It is seems fair to say that this shows a substantial \emph{theoretical} success of the notion of kernelization.

From a practical point of view, we might have to do more work to convince a practitioner that our positive kernelization results are also \emph{worth implementing}. This includes choice of parameter, computational complexity, and also conceptual difficulty (e.g., number of black box subroutines, huge hidden constants). Stronger parameterizations already receive substantial interest from a theoretical point of view, see e.g.,~\cite{DBLP:journals/ejc/FellowsJR13}, and there is considerable interest in making kernelizations fast, see e.g.,~\cite{BevernHKNW11,Hagerup11,Bevern12,Kammer13}. Conceptual difficulty is of course ``in the eye of the beholder'' and perhaps hard to quantify.

In this work, we take the perspective that kernelizations that work in a restricted model might, depending on the model, be provably robust and useful/implementable (and hopefully also fast). Concretely, in the spirit of studying restricted models, we ask which kernelizations can be made to work in a \emph{streaming model} where the kernelization has a small local memory and only gets to look at the input once, or a bounded number of times. The idea is that the kernelization should maintain a sufficiently good sketch of the input that in the end will be returned as the reduced instance.

We think that this restricted model for kernelization has several further benefits: First of all, it naturally improves the memory access patterns since the input is read sequentially, which should be beneficial already for medium size inputs. (It also works more naturally for huge inputs, but huge instances of \NP-hard problems are probably only really addressable by outright use of heuristics or sampling methods.) Second, it is naturally connected to a dynamic/incremental setting since, due to the streaming setting, the algorithm has not much choice but to essentially maintain a simplified instance of bounded size that is equivalent to the input seen so far (or be able to quickly produce one should the end of the stream be declared). Thus, as further input arrives, the problem kernel is adapted to the now slightly larger instance without having to look at the whole instance again. (In a sense, the kernelization could run in parallel to the \emph{creation} of the 
actual input.) Third, it 
appears, at least in our positive results, that one could easily translate this to a parallel setting where, effectively, several copies of the algorithm work on different positions on the stream to simplify the instance (this however would require that an algorithm may delete data from the stream).

\emph{Our results.}
In this work we consider a streaming model where elements of a problem instance are presented to a kernelization algorithm in arbitrary order. The algorithm is required to return an equivalent instance of size polynomial in parameter $k$ after the stream has been processed. Furthermore, it is allowed to use $\Oh(poly(k) \log n)$ bits of memory, i.e., an overhead factor of $\Oh(\log n)$ is used in order to distinguish between elements of an instance of size $n$.

We show that \dHSk and \dSMk admit streaming kernels of size $\Oh(k^d \log d)$ while using $\Oh(k^d \log |U|)$ bits of memory where $U$ is the universal set of an input instance. We then consider a single pass kernel for \EDSk and find that it requires at least $m-1$ bits of memory for instances with $m$ edges. This rules out streaming kernels with $c \cdot poly(k) \log n$ bits for instances with $n$ vertices since for any fixed $c$ and $poly(k)$ there exist instances with $m-1 > c \cdot poly(k) \log n$. Insights obtained from this lower bound allow us to develop a general lower bound for the space complexity of single pass kernels for a class of parameterized graph problems. 

Despite the lower bound for single pass kernels, we show that \EDSk admits a streaming kernel if it is allowed to make a pass over the input stream twice. Finally, we use communication complexity games in order to rule out similar results for \CEk and \MFIk and show that multi-pass streaming kernels for these problems must use $\Omega(n)$ bits of local memory for graphs with $n$ vertices, even when a constant number of passes are allowed. 

\emph{Related work.} The data stream model is formalized by Henzinger et al.~\cite{raghavan1999computing}. Lower bounds for exact and randomized algorithms with a bounded number of passes over the input stream are given for various graph problems and are proven by means of communication complexity. An overview is given by Babcock et al.~\cite{babcock2002models} in which issues that arise in the data stream model are explored. An introduction and overview of algorithms and applications for data streams is given by Muthukrishnan \cite{muthukrishnan2005data}.

\emph{Organization.} 
Section \ref{sec:prelim} contains preliminaries and a formalization of kernelization algorithms in the data streaming setting. Single pass kernels for \dHSk and \dSMk are presented in Section \ref{sec:singlepass}. The lower bounds for single pass kernels are given in Section \ref{sec:singlepass:bounds}. The 2-pass kernel for \EDSk is shown in Section \ref{sec:2pass} while lower bounds for multi-pass kernels are given in Section \ref{sec:multi}. Finally, Section \ref{sec:conc} contains concluding remarks.

\section{Preliminaries} \label{sec:prelim}
We use standard notation from graph theory. For a set of edges $E$, let $V(E)$ be the set of vertices that are 
incident with edges in $E$. For a graph $G=(V,E)$, let $G[V]$ denote the subgraph of $G$ induced by $V$.  Furthermore, let $G[E]$ be the subgraph induced by $E$, i.e. $G[E] = G(V(E), E)$. 

A \emph{parameterized problem} is a language $Q \subseteq \Sigma^* \times \N$;
the second component~$k$ of instances~$(x,k)$ is called the \emph{parameter}. 
A parameterized problem is \emph{fixed-parameter tractable} if there is an algorithm that decides if $(x, k) \in Q$ 
in $f(k)|x|^{\mathcal{O}(1)}$ time, where $f$ is any computable function. 
A \emph{kernelization algorithm (kernel)} for a parameterized problem $Q \subseteq \Sigma^* \times N$ is an algorithm that, for input $(x, k) \in \Sigma^* \times \N$ outputs a pair $(x', k') \in \Sigma^* \times \N$ in $(|x|+k)^{\mathcal{O}(1)}$ time such that $|x'|, k' < g(k)$ for some computable function $g$, called the \emph{size} of the kernel, and $(x, k) \in Q \Leftrightarrow (x', k') \in Q$. A \emph{polynomial kernel} is a kernel with polynomial size.

\subsection*{Kernelization in the data-streaming model}

An input stream is a sequence of elements of the input problem. 
We denote the start of an input stream by $\langle$ and let $\rangle$ denote the end, 
e.g. $\langle e_1, e_2, \ldots, e_m \rangle$ denotes an input stream for a sequence of $m$ elements. 
We use $\halt$ to denote a halt in the stream and $\resume$ to denote its continuation, 
e.g. $\langle e_1, e_2 \halt$ and $\resume e_3, \ldots, e_m \rangle$ denote the same input stream broken up in two parts.

A \emph{streaming kernelization algorithm} (\emph{streaming kernel}) is an algorithm that receives input $(x, k)$ for a
parameterized problem in the following fashion. The algorithm is presented with an input stream where elements of $x$ are presented in a sequence,
i.e. adhering to the cash register model \cite{muthukrishnan2005data}. 
Finally, the algorithm should return a kernel for the problem upon request. A \emph{$t$-pass streaming kernel} is a streaming kernel that is allowed $t$ passes over the input stream before a kernel is requested. 

If $x$ is a graph, then the sequence of elements of $x$ are its edges in arbitrary ordering. In a natural extension to hypergraphs, if $x$ is a family of subsets on some ground set $U$, then the sequence of elements of $x$ are the sets of this family in arbitrary ordering. We assume that a streaming kernelization algorithm receives parameter $k$ and the size of the vertex set (resp. ground set) before the input stream. Note that this way isolated vertices are given implicitly.

Furthermore, we require that the algorithm uses a limited amount of space at any time during its execution. In the strict streaming kernelization setting the streaming kernel must use at most $p(k)\log |x|$ space where $p$ is a polynomial.
We will refer to a 1-pass streaming kernelization algorithm which upholds these space bounds simply as a \emph{streaming kernelization}.

We assume that words of size $\log |x|$ in memory can be compared in $\Oh(1)$ operations when considering the running time of the streaming kernelization algorithms in each step.

\section{Single pass kernelization algorithms}\label{sec:singlepass}
In this section we will show streaming kernelization algorithms for $d$-\textsc{Hitting Set}$(k)$ 
and $d$-\textsc{Set Matching}$(k)$ in the 1-pass data-stream model. 
These algorithms make a single pass over the input stream after which they output a kernel.
We analyze their efficiency with regard to local space and the worst case processing time for a 
single element in the input stream.

\medskip
\noindent
\introduceparameterizedproblem{\dHSk}{A set~$U$ and a family~$\F$ of subsets of~$U$ each of size at most~$d$, i.e.~$\F\subseteq\binom{U}{\leq d}$, and~$k\in \N$.}{$k$.}{Is there a set~$S$ of at most~$k$ elements of~$U$ that has a nonempty intersection with each set in~$\F$?}

In the following, we describe a single step of the streaming kernelization. After Step~$t$, the algorithm has seen a set~$\F_t\subseteq \F$, where~$\F$ denotes the whole set of edges provided in the stream. The memory contains some subset~$\F'_t\subseteq \F_t$, using for each~$F\in\F'_t$ a total of at most~$d\log n=\Oh(\log n)$ bits to denote the up to~$d$ elements therein. The algorithm maintains the invariant that the number of sets~$F\in\F'_t$ that contain any~$C\in\binom{U}{\leq d-1}$ as a subset is at most $(d-|C|)!\cdot (k+1)^{d-|C|}$. For intuition, let us remark that this strongly relates to the sunflower lemma. Now, let us consider Step~$t+1$. The memory contains some~$\F'_t\subseteq \F_t$ and a new set~$F$ arrives.
\smallskip

1. Iterate over all subsets~$C$ of~$F$, ordered by decreasing size.

2. Count the number of sets in~$\F'_t$ that contain~$C$ as a subset.

3. If the result equals~$(d-|C|)!\cdot (k+1)^{d-|C|}$ then the algorithm decides not to store~$F$ and ends the computation for Step~$t+1$, i.e., let~$\F'_{t+1}=\F'_t$.
 
4. Else, continue with the next set~$C$.
 
5. If no set~$C\subseteq F$ gave a total of~$(d-|C|)!\cdot (k+1)^{d-|C|}$ sets containing~$F$ then the algorithm decides to store~$F$, i.e.,~$\F'_{t+1}=\F'_t\cup\{F\}$. Note that this preserves the invariant for all~$C\in\binom{U}{d-1}$ since only the counts for~$C$ with~$C\subseteq F$ can increase, but all those were seen to be strictly below the threshold~$(d-|C|)!\cdot (k+1)^{d-|C|}$ so they can at most reach equality by adding~$F$.

\smallskip
To avoid confusion, let us point out that at any time the algorithm only has a single set~$\F'_t$; the index~$t$ is used for easier discussion of the changes over time.

\begin{observation}\label{obs:dhs}
The algorithm stores at most~$d!(k+1)^d=\Oh(k^d)$ sets at any point during the computation. This follows directly from the invariant when considering~$C=\emptyset$.
\end{observation}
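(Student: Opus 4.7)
The plan is to verify the invariant stated just before the observation and then apply it at $C = \emptyset$ to get the claimed count. Since the observation already points out that everything follows from the invariant at $C = \emptyset$, the bulk of the work is justifying that the invariant truly holds after every step $t$.

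First I would argue the invariant by induction on $t$. The base case $t=0$ is immediate since $\F'_0 = \emptyset$ and all counts are zero, well below $(d-|C|)!(k+1)^{d-|C|}$. For the inductive step, assume the invariant holds for $\F'_t$ and consider the arrival of a new set $F$ in Step $t+1$. If the algorithm decides not to store $F$, then $\F'_{t+1}=\F'_t$ and no count changes, so the invariant is inherited. If the algorithm decides to store $F$, then by Step 5, for every $C \in \binom{U}{\leq d-1}$ with $C \subseteq F$ the count of sets in $\F'_t$ containing $C$ was strictly below the threshold $(d-|C|)!(k+1)^{d-|C|}$; hence after adding $F$ each such count increases by at most one and stays at most the threshold. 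For $C \in \binom{U}{\leq d-1}$ with $C \not\subseteq F$, the corresponding count is unchanged. Thus the invariant is preserved in both branches.

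Having established the invariant, the conclusion is immediate: the empty set $C=\emptyset$ is a subset of every $F \in \F'_t$ and satisfies $|C| = 0 \leq d-1$ (assuming the meaningful case $d \geq 1$), so applying the invariant at $C = \emptyset$ gives
\[
|\F'_t| \;\leq\; (d-0)!\cdot(k+1)^{d-0} \;=\; d!(k+1)^d \;=\; \Oh(k^d),
\]
since $d$ is a constant.

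There is essentially no obstacle here: the main (and only) subtlety is checking that Step 5 really does maintain the invariant, which relies on the iteration in Step 1 examining every subset $C \subseteq F$ of size up to $d-1$ (in fact all subsets) before deciding whether to add $F$. Once that bookkeeping is made explicit, instantiating the invariant at $C=\emptyset$ finishes the proof.
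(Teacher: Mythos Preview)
Your proposal is correct and matches the paper's approach: the paper establishes the invariant inline in Step~5 of the algorithm description (exactly the argument you spell out by induction on~$t$), and the observation then simply instantiates it at $C=\emptyset$. You are just making the induction explicit, which is fine; there is no substantive difference.
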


\begin{theorem}\label{thm:dhs} \emph{($\bigstar$\footnote{Proofs of statements marked with $\bigstar$ are postponed to the appendix.})}
\dHSk admits a streaming kernelization which, using $\Oh(k^d \log |U|)$ bits of local memory and $\Oh(k^d)$ time in each step, returns an equivalent instance of size $\Oh(k^d \log k)$.
\end{theorem}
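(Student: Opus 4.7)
The plan is to verify in turn the memory bound, the per-step time, the output size, and finally correctness, which is where the real work lies.

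For memory, Observation~\ref{obs:dhs} already bounds the number of stored sets by $d!(k+1)^d=\Oh(k^d)$, and each such set is a list of at most $d$ elements of~$U$ using $\Oh(\log|U|)$ bits (since~$d$ is a fixed constant), giving $\Oh(k^d\log|U|)$ bits in total. For the per-step cost, when $F$ arrives the algorithm iterates over the $2^d=\Oh(1)$ subsets $C\subseteq F$, and for each $C$ a linear scan of the at most $\Oh(k^d)$ stored sets counts those that contain~$C$ in $\Oh(k^d)$ time, so $\Oh(k^d)$ per step in total. For the output size, the final family~$\F'$ has $\Oh(k^d)$ sets mentioning at most $\Oh(k^d)$ distinct ground-set elements, so after relabelling them with fresh $\Oh(\log k)$-bit identifiers the kernel fits in $\Oh(k^d\log k)$ bits.

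The heart of the argument is correctness, i.e.\ that $(\F,k)$ and $(\F',k)$ are \yes-instances simultaneously. Since $\F'\subseteq\F$, only one direction is nontrivial, and by induction on the steps it reduces to the claim that whenever the algorithm discards a set~$F$ in step $t{+}1$, every hitting set~$S$ of $\F'_t$ with $|S|\leq k$ already hits~$F$. The discard happens because some $C\subseteq F$ has exactly $(d-|C|)!(k+1)^{d-|C|}$ supersets in~$\F'_t$. Assuming $S\cap F=\emptyset$ forces $S\cap C=\emptyset$, so $S$ must hit each of those supersets using an element outside~$C$. Writing $s=d-|C|$, a pigeonhole on $|S|\leq k$ produces some $x\in S$ lying in at least $\lceil s!(k+1)^s/k\rceil$ of them, so $\F'_t$ contains that many sets containing $C\cup\{x\}$. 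A short calculation will confirm $\lceil s!(k+1)^s/k\rceil > (s-1)!(k+1)^{s-1}$ whenever $s\geq 1$, contradicting either the invariant applied to $C\cup\{x\}$ (when $|C\cup\{x\}|\leq d-1$) or the trivial bound that at most one set of size $\leq d$ can contain a fixed $d$-element subset (when $|C\cup\{x\}|=d$); the boundary case $s=0$ is immediate since then the only set containing~$C$ is $C$ itself, which $S$ would have to hit. Thus $S\cap C\neq\emptyset$ in every case, so $S\cap F\neq\emptyset$.

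The pigeonhole-meets-invariant step is the main obstacle; everything else is essentially bookkeeping and encoding.
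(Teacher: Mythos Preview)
Your proposal is correct. The memory, per-step time, and output-size arguments are exactly what the paper does, and your correctness argument is sound: the pigeonhole over the at most $k$ elements of~$S$ forces some $x\in S\setminus C$ into more than $(s-1)!(k+1)^{s-1}$ of the stored supersets of~$C$, which contradicts the invariant (or the trivial $d$-element bound) applied to $C\cup\{x\}$. The inequality $\lceil s!(k+1)^s/k\rceil>(s-1)!(k+1)^{s-1}$ indeed holds for all $s\geq 1$ and $k\geq 1$ since the ratio is $s(k+1)/k>1$ and the right-hand side is an integer; the $s=0$ case is handled as you say.

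The correctness argument, however, takes a genuinely different route from the paper's. The paper does not apply pigeonhole directly over~$S$; instead it first constructs, inside $\F'_t\cup\{F\}$, a maximal sunflower with core~$C$: sets $F_1,\ldots,F_\ell$ all containing~$C$ with the petals $F_j\setminus C$ pairwise disjoint. Using the invariant on each $C\cup\{u\}$ for $u$ in the union of petals, a counting argument forces $\ell\geq k+2$, hence at least $k+1$ such sets lie in~$\F'_t$. Only then does the paper pigeonhole over~$S$, concluding that some element of~$S$ hits two of these sets and therefore lies in their intersection~$C\subseteq F$. Your argument bypasses the sunflower construction entirely by pigeonholing over~$S$ against all stored supersets of~$C$ at once; this is shorter and arguably cleaner. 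What the paper's detour buys is an explicit structural witness (a large sunflower with core~$C$) that makes the connection to the classical sunflower lemma visible, which is the intuition the paper alludes to before the algorithm description.
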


The time spent in each step can be improved from $\Oh(|\F'_t|)$ to $\Oh(\log |\F'_t|)$ at the cost of an increase in local space
by a constant factor. This can be realized with a tree structure $\mathbb{T}$ in which the algorithm maintains the number of sets in $\F'_t$ that contain a set $C \in \binom{U}{\leq d-1}$ as a subset.

Each $C \subseteq F'$, $F' \in \F'_t$ has a corresponding node in $\mathbb{T}$ and in this node the number of supersets of $C$ in $\F'_t$ are stored. Let the root node represent $C = \emptyset$ with a child for each set $C$ of size 1. In general, a node is assigned an element in $e \in \bigcup \F'_t$ and represents $C = C' \cup \{e\}$ where $C'$ is the set represented by its parent, i.e. $|C| = d$ for nodes with depth $d$. 

For each node, let $e_i$ be assigned to child node $n_i$. Furthermore, each node has a dictionary, i.e. a collection of (key, value) pairs $(e_i, n_i)$ in order to facilitate quick lookup of its children. Let us assume that there is some arbitrary ordering on elements that are in sets of $\F'_t$, e.g. by their identifier. Then the dictionary can be implemented as a self-balancing binary search tree. This allows us to find a child node and insert new child node in time $\Oh(\log h)$ if there are $h$ children.

\begin{corollary}\label{cor:dhs} \emph{($\bigstar$)}
 \dHSk admits a streaming kernelization which, using $\Oh(k^d \log |U|)$ bits of local memory and $\Oh(\log k)$ time in each step, returns an equivalent instance of size $\Oh(k^d \log k)$.
\end{corollary}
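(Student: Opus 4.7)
The plan is to implement the algorithm of Theorem \ref{thm:dhs} verbatim, but replace the $\Oh(|\F'_t|)$ scan that computes, for each $C\subseteq F$, the number of supersets of $C$ in $\F'_t$, with a constant number of queries and updates on the trie $\mathbb{T}$ described after Theorem \ref{thm:dhs}. Concretely, I will maintain the invariant that $\mathbb{T}$ contains exactly one node for every $C$ that is a subset of some $F'\in\F'_t$, that this node stores the element $e\in U$ by which $C$ extends the set of its parent, and that it stores the count $|\{F'\in\F'_t : C\subseteq F'\}|$. On arrival of a new set $F$, I iterate over the $2^d=\Oh(1)$ subsets $C\subseteq F$ in decreasing order of size, navigating from the root of $\mathbb{T}$ along the sorted representation of $C$; if any count reaches the threshold $(d-|C|)!(k+1)^{d-|C|}$, the set $F$ is dropped, otherwise $F$ is added to $\F'_t$ and the counts at all $2^d$ subsets of $F$ are incremented (creating missing nodes along the way). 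Correctness is immediate from the correctness of the algorithm in Theorem \ref{thm:dhs}, since the set of candidate subsets, the threshold tests, and the resulting $\F'_t$ are unchanged.

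For the time bound, fix a step and observe that by Observation~\ref{obs:dhs} we have $|\F'_t|=\Oh(k^d)$, hence $|\bigcup\F'_t|\le d\cdot|\F'_t|=\Oh(k^d)$. The fanout of any node of $\mathbb{T}$ is therefore at most $\Oh(k^d)$, so a single dictionary lookup in a self-balancing BST costs $\Oh(\log k^d)=\Oh(d\log k)=\Oh(\log k)$. A path query or insertion traverses at most $d$ levels, contributing $\Oh(d\cdot\log k)=\Oh(\log k)$, and since we do only $\Oh(2^d)=\Oh(1)$ such queries and at most $\Oh(2^d)=\Oh(1)$ increments per step, the total per-step time is $\Oh(\log k)$.

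For the space bound, the number of nodes of $\mathbb{T}$ is at most $\sum_{F'\in\F'_t}2^d=\Oh(k^d)$, because a node is only created when it corresponds to a subset of some $F'\in\F'_t$. Each node stores an element identifier in $\Oh(\log|U|)$ bits, a count bounded by $d!(k+1)^d$ in $\Oh(\log k)$ bits, and a pointer into its parent's BST; adding the BST overhead, the whole trie fits in $\Oh(k^d\log|U|)$ bits, matching the bound of Theorem~\ref{thm:dhs}. The output is produced by a final traversal that lists $\F'_t$, again of size $\Oh(k^d\log k)$. The main subtlety I expect is the cautious argument that nodes must be created only lazily: naively allocating a node for every $C\in\binom{U}{\le d}$ would destroy the space bound, so the invariant has to be stated relative to subsets of stored sets, and the increment step must interleave node creation with the already-performed threshold check to avoid counting a node toward $\mathbb{T}$ before its corresponding $C$ is actually needed.
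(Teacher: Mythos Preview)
Your proposal is correct and follows essentially the same approach as the paper: maintain the trie $\mathbb{T}$ indexed by sorted element sequences, with self-balancing BSTs at each node for child lookup, so that each of the $2^d=\Oh(1)$ subsets of an arriving $F$ is located or inserted in $\Oh(d\log|\F'_t|)=\Oh(\log k)$ time, and bound the number of nodes by $2^d|\F'_t|=\Oh(k^d)$. The paper's proof is the same argument with the same bounds; your remark about lazy node creation is exactly the point the paper makes implicitly when it says a node exists only for $C\subseteq F'$ with $F'\in\F'_t$.
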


\medskip
\noindent
\introduceparameterizedproblem{\dSMk}{A set~$U$ and a family~$\F$ of subsets of~$U$ each of size at most~$d$, i.e.~$\F\subseteq\binom{U}{\leq d}$, and~$k\in \N$.}{$k$.}{Is there a matching $M$ of at least~$k$ sets in~$\F$, i.e. are there~$k$ sets in~$\F$ that are pairwise disjoint?}

The streaming kernelization will mostly perform the same operations in a single step as the algorithm described above such that only the invariant differs. In this case it is maintained that the number of sets $F \in \F'_t$ that contain any $C \in \binom{U}{\leq d-1}$ as a subset is at most $(d-|C|)!\cdot(d(k-1)+1)^{d-|C|}$.

\begin{observation}
 The algorithm stores at most~$d!(d(k-1)+1)^d = \Oh(k^d)$ sets at any point during the computation. This follows directly from the invariant when considering $C = \emptyset$.
\end{observation}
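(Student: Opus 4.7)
The bound is designed to be an immediate consequence of the stated invariant: every $F \in \F'_t$ contains the empty set as a subset, so instantiating the invariant at $C = \emptyset$ gives $|\F'_t| \le (d-0)!\,(d(k-1)+1)^{d-0} = d!\,(d(k-1)+1)^d = \Oh(k^d)$. Hence the only work is to verify that the invariant is actually preserved by the algorithm over the course of the stream.

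I would check preservation by induction on $t$, mirroring the argument sketched for \dHSk. The base case $\F'_0 = \emptyset$ is trivial since every count equals zero. For the inductive step, consider the arrival of a new set $F$ in Step $t+1$. If the algorithm declines to store $F$, then $\F'_{t+1} = \F'_t$ and no count changes. Otherwise the algorithm has iterated over all $C \subseteq F$ and, in particular for every such $C$ with $|C| \le d-1$, certified that the number of sets in $\F'_t$ containing $C$ is strictly below the threshold $(d-|C|)!\,(d(k-1)+1)^{d-|C|}$. Setting $\F'_{t+1} = \F'_t \cup \{F\}$ increments precisely those counters whose $C$ satisfies $C \subseteq F$, each by exactly one, and leaves all other counters untouched. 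Thus every counter remains at most its threshold, and the invariant is preserved.

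I do not anticipate a real obstacle for the observation itself: it is a pure size statement, and the argument is mechanical once the invariant is in place. The genuine conceptual difficulty, which lies outside the scope of this observation, is to justify why the particular threshold $(d(k-1)+1)^{d-|C|}$ is sufficient to preserve the answer for \dSMk (analogously to $(k+1)^{d-|C|}$ for \dHSk). Intuitively, any $k$-matching using a set that extends a core $C$ occupies at most $d(k-1)$ elements outside $C$, so among any $d(k-1)+1$ pairwise-disjoint extensions of $C$ at least one avoids a given partial matching; iterating this recursion down the subset lattice is what yields the factor $(d-|C|)!\,(d(k-1)+1)^{d-|C|}$ appearing in the invariant.
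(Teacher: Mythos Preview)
Your proposal is correct and matches the paper's approach: the observation is stated as an immediate consequence of the invariant at $C=\emptyset$, and the paper treats invariant preservation exactly as you do (informally in the algorithm description, noting that only counts for $C\subseteq F$ can increase and those were strictly below threshold). Your additional paragraph verifying preservation by induction is more explicit than what the paper writes, but it is the same argument.
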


\begin{theorem} \label{thm:dsm}  \emph{($\bigstar$)}
 \dSMk admits a streaming kernelization which, using $\Oh(k^d \log |U|)$ bits of local memory and $\Oh(k^d)$ time in each step, returns an equivalent instance of size $\Oh(k^d \log k)$.
\end{theorem}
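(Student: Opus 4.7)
The plan is to mirror the proof of Theorem~\ref{thm:dhs}, with the sunflower-style swap replaced by a direct counting argument tailored to matching. The space, per-step time, and kernel-size bounds are immediate from the preceding observation: the algorithm stores $\Oh(k^d)$ sets, each encoded in $d\log|U|=\Oh(\log|U|)$ bits, and each step iterates over the $\Oh(1)$ subsets $C$ of an incoming $F$, counting their supersets in $\F'_t$ in time $\Oh(|\F'_t|)=\Oh(k^d)$. The $\Oh(k^d\log k)$ output bound then follows by relabelling the $\Oh(k^d)$ distinct elements of $U$ that occur in the stored family with fresh identifiers of $\Oh(\log k)$ bits.

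The main content is the inductive invariant that after every step $t$ the stored family $\F'_t$ admits a $k$-matching iff the prefix $\F_t$ seen so far does. Only the discard step is nontrivial: suppose $F$ arrives in step $t+1$ and is not stored because some $C\subseteq F$ has exactly $(d-|C|)!\,(d(k-1)+1)^{d-|C|}$ supersets in $\F'_t$, and assume $\F_t\cup\{F\}$ has a $k$-matching $M$. If $F\notin M$ the induction hypothesis applies directly, so the interesting case is $M=\{F,F_1,\ldots,F_{k-1}\}$; here $V(\{F_1,\ldots,F_{k-1}\})$ has at most $d(k-1)$ vertices and is disjoint from $F\supseteq C$.

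I would then exhibit a substitute $F^\star\in\F'_t$ with $C\subseteq F^\star$ and $F^\star\cap V(\{F_1,\ldots,F_{k-1}\})=\emptyset$. Writing $\alpha=d-|C|$ and $\beta=d(k-1)+1$, every ``bad'' superset of $C$ in $\F'_t$ meets some $u\in V(\{F_1,\ldots,F_{k-1}\})$, and for each such $u$ the invariant applied to $C\cup\{u\}$ gives at most $(\alpha-1)!\,\beta^{\alpha-1}$ supersets in $\F'_t$. Summing over the at most $\beta-1$ choices of $u$ bounds the bad supersets by $(\beta-1)(\alpha-1)!\,\beta^{\alpha-1}$, strictly less than the total $\alpha!\,\beta^\alpha$, since the difference equals $(\alpha-1)!\,\beta^{\alpha-1}\bigl[\beta(\alpha-1)+1\bigr]\ge 1$. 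Hence a good $F^\star$ exists, $\{F^\star,F_1,\ldots,F_{k-1}\}$ is a $k$-matching in $\F_t$, and by induction $\F'_t$ contains a $k$-matching as well. The corner case $|C|=d$ simply means $F\in\F'_t$ already, so that copy serves as $F^\star$.

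The main obstacle is selecting the swap argument: a naive sunflower approach is awkward because the sunflower core can strictly contain $C$ and then need not be disjoint from the other matched sets, whereas the union-bound argument above applies the invariant at exactly the right granularity and sidesteps this issue.
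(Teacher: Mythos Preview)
Your proof is correct, and the accounting for space, per-step time, and output size matches the paper's. The swap step, however, is argued differently. The paper reuses the sunflower extraction from Lemma~\ref{lem:dhs} verbatim: from the $(d-|C|)!\,(d(k-1)+1)^{d-|C|}$ supersets of~$C$ in~$\F'_t$ it obtains at least $d(k-1)+1$ sets whose pairwise intersection is \emph{exactly}~$C$, and then notes that the $\leq d(k-1)$ elements of the remaining $k-1$ matched sets can hit at most $d(k-1)$ of the pairwise-disjoint petals, leaving a spare set in~$\F'_t$ to substitute for~$F$. You skip the sunflower entirely and union-bound directly over those $\leq d(k-1)$ elements via the invariant at $C\cup\{u\}$. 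Your route is self-contained and needs nothing beyond the global invariant (in particular, not the decreasing-size iteration order); the paper's route buys wholesale reuse of the earlier lemma and reduces the $d$-\textsc{Set Matching} argument to a short remark. Your stated concern that ``the sunflower core can strictly contain~$C$'' does not actually bite in the paper's version, since its extraction forces the core to equal~$C$; still, your direct counting is a clean alternative.
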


Similar to the algorithm described in the previous section, the running time in each step can be improved at the cost of an increase in local space by a constant factor. We omit an explicit proof.

\begin{corollary}
 \dSMk admits a streaming kernelization which, using $\Oh(k^d \log |U|)$ bits of local memory and $\Oh(\log k)$ time in each step, returns an equivalent instance of size $\Oh(k^d \log k)$.
\end{corollary}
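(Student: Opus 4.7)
The plan is to mirror the tree-based speedup used for \dHSk in Corollary~\ref{cor:dhs}, changing only the threshold value from $(d-|C|)!\cdot(k+1)^{d-|C|}$ to $(d-|C|)!\cdot(d(k-1)+1)^{d-|C|}$ so that the invariant of Theorem~\ref{thm:dsm} is the one being maintained. Correctness and output size then carry over verbatim from Theorem~\ref{thm:dsm}; only the per-step time and the constant factor in the space bound need to be re-examined.

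Concretely, first fix an arbitrary total order on $U$ and introduce an auxiliary tree $\mathbb{T}$ rooted at a node representing $C=\emptyset$. A node at depth $i$ represents a subset $C\in\binom{U}{i}$ that occurred as a subset of some $F'\in\F'_t$, and it stores the current count of sets in $\F'_t$ containing $C$. Each node's children are indexed by the next element of $C$ (in the fixed order) via a self-balancing binary search tree, giving $\Oh(\log h)$ lookup and insertion for $h$ children. On receiving a new set $F$, iterate over the $\Oh(2^d)=\Oh(1)$ subsets $C\subseteq F$ in decreasing order of size: for each $C$, walk the sorted elements of $C$ down $\mathbb{T}$ in $\Oh(d\log|\F'_t|)=\Oh(\log k)$ time and compare the stored count against $(d-|C|)!\cdot(d(k-1)+1)^{d-|C|}$. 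If equality holds for some $C$, discard $F$ and end the step. Otherwise accept $F$ into $\F'_{t+1}$ and, for each $C\subseteq F$, increment the count at its node (creating new nodes along the way). All together this is $\Oh(\log k)$ operations per step.

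For space, every node of $\mathbb{T}$ corresponds to a subset of some $F'\in\F'_t$, so there are at most $2^d\cdot|\F'_t|=\Oh(k^d)$ nodes, each holding one element identifier and one count of $\Oh(\log|U|)$ bits (assuming w.l.o.g.\ $k\leq|U|$). This yields $\Oh(k^d\log|U|)$ bits overall, matching Theorem~\ref{thm:dsm} up to a constant factor, as required. The only real obstacle is confirming that replacing the threshold does not disturb the data structure, but since the threshold appears only inside a single $\Oh(1)$ comparison against the count at the reached node, neither the traversal cost nor the node-creation cost is affected.
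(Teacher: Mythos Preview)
Your proposal is correct and is exactly the approach the paper intends: the paper explicitly omits the proof, stating that it is ``similar to the algorithm described in the previous section,'' i.e., the tree-based speedup of Corollary~\ref{cor:dhs} with only the threshold replaced by $(d-|C|)!\cdot(d(k-1)+1)^{d-|C|}$. Your write-up faithfully reproduces that argument, and the space and time analyses match those of Corollary~\ref{cor:dhs} up to the different constant in the invariant.
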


\section{Space lower bounds for single pass kernels} \label{sec:singlepass:bounds}

We will now present lower bounds on the memory requirements of single pass streaming kernelization algorithms for a variety of graph problems. Before giving a general lower bound we first illustrate the essential obstacle by considering the \EDSk problem.  We show that a single pass kernel for \EDSk requires at least $m-1$ bits of memory on instances with~$m$ edges.

\medskip\noindent
\introduceparameterizedproblem{\EDSk}{A graph $G = (V,E)$ and $k \in \N$.}{$k$.}{Is there a set~$S$ of at most~$k$ edges such that every edge in~$E \setminus S$ is incident with an edge in~$S$?}

An obstacle that arises for many problems, such as \EDSk, is that they are not monotone under adding additional edges, i.e., additional edges do not always increase the cost of a minimum edge dominating set but may also decrease it. This decrease, however, may in turn depend on the existence of a particular edge in the input. Thus, on an intuitive level, it may be impossible for a streaming kernelization to ``decide'' which edges to forget, since worst-case analysis effectively makes additional edges behave adversarial. (Note that our lower bound does not depend on assumptions on what the kernelization decides to store.)

Consider the following type of instance as a concrete example of this issue. The input stream contains the number of vertices (immaterial for the example), the parameter value~$k=1$, and a sequence of edges $\langle \{a, v_1\} \ldots \{a, v_{n}\}, \{b, v\} \rangle$. That is, the first~$n$ edges form a star with $n$ leaves and center vertex~$a$. In order to use a relatively small amount of local memory the kernelization algorithm is forced to do some compression such that not every edge belonging to this star is stored in local memory. Now a final edge arrives and the algorithm returns a kernel. Note that the status of the problem instance depends on whether or not this edge is disjoint from the star: If it shares at least one vertex~$v_i$ with the star then there is an edge dominating set~$\{a,v_i\}$ of size one. Otherwise, if it is disjoint then clearly at least two edges are needed. Thus, from the memory state after the final edge we must be able to extract whether or not~$v$ is contained in~$\{v_1,\ldots,
v_n\}$; in other words, this is equivalent to whether or not the output kernelized instance is \yes or \no. (We assume that~$a,b\notin\{v_1,\ldots,v_n\}$ for this example.) This, however, is a classic problem for streaming algorithms that is related to the \emph{set 
reconciliation problem} and it is known to require at least~$n$ bits \cite{muthukrishnan2005data}; we give a short self-contained proof for our lower bound. 

\begin{theorem} \label{thm:eds}  \emph{($\bigstar$)}
 A single pass streaming kernelization algorithm for \EDSk requires at least $m-1$ bits of local memory for instances with $m$ edges.
\end{theorem}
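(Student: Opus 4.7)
The plan is to formalize the intuition given in the excerpt via a pigeonhole/encoding argument: a single-pass streaming kernel must distinguish $2^{m-1}$ possible $(m-1)$-edge prefixes, since each can be forced to yield a different yes/no decision once a carefully chosen $m$-th edge is revealed. The construction is a small variant of the example already presented, designed so that the encoded bit string can be read off by one ``query'' edge.

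I would fix vertices $a, b, v_1, v_1', \ldots, v_{m-1}, v_{m-1}'$ and announce this vertex set together with $k=1$. For each $y \in \{0,1\}^{m-1}$, the first $m-1$ stream edges form a star centered at $a$ in which coordinate $i$ contributes $\{a, v_i\}$ if $y_i = 1$ and $\{a, v_i'\}$ if $y_i = 0$. For any chosen index $j \in [m-1]$, the $m$-th stream edge is $\{b, v_j\}$. A short case check shows that this $m$-edge instance is a \yes-instance for $k = 1$ exactly when $y_j = 1$: if $y_j = 1$, the edge $\{a, v_j\}$ lies in $E$ and dominates every star edge through $a$ and the query edge through $v_j$; if $y_j = 0$, the query edge shares no vertex with any star edge, so any single edge of $E$ leaves either the query edge or the star undominated, forcing two edges in any EDS.

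Now suppose toward a contradiction that a single-pass streaming kernelization uses at most $s \leq m-2$ bits on these $m$-edge instances. There are $2^{m-1} > 2^s$ choices of $y$ but at most $2^s$ possible memory configurations after the $(m-1)$-edge prefix, so pigeonhole produces $y \neq y'$ whose prefixes drive the algorithm to the same state. Choosing $j$ with $y_j \neq y_j'$ and feeding $\{b, v_j\}$ as the next (and final) stream element, the algorithm's subsequent evolution and hence its returned kernel are identical on the two streams, so any fixed solver applied to the kernels returns the same answer. This contradicts the equivalence of the previous paragraph, which makes the two instances disagree on \EDSk. Therefore $s \geq m-1$.

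The main technical obstacle is less the combinatorics than the modelling: one must ensure that ``local memory'' captures the entire algorithm state between stream elements (so that two colliding states really do force identical subsequent behaviour), that the algorithm is deterministic (otherwise a randomized argument via the $\mathsf{INDEX}$ one-way communication lower bound would be required), and that the stream orderings used are admissible under the arbitrary-order cash-register model. All three points are covered by the definitions of Section~\ref{sec:prelim}, after which the pigeonhole step is essentially immediate.
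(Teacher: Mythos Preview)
Your proof is correct and follows essentially the same approach as the paper: a star centered at~$a$ whose leaf set encodes a bit string, followed by a single ``query'' edge from a fresh vertex~$b$, with~$k=1$, and a pigeonhole collision on memory states after the prefix. The one (minor) difference is that you encode bit~$i$ by the choice between leaves~$v_i$ and~$v_i'$, so every prefix has exactly~$m-1$ edges and every full instance exactly~$m$ edges; the paper instead encodes by presence/absence of leaf~$s_i$, so its instances have at most~$m$ edges---your variant thus matches the theorem's ``instances with~$m$ edges'' phrasing slightly more cleanly, but the argument is otherwise identical.
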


\subsection*{General lower bound for a class of parameterized graph problems}\label{ssec:singlepass:generallb}

In the following we present space lower bounds for a number of parameterized graph problems. By generalizing the previous argument we find a common property that can be used to quickly rule out single pass kernels with $\Oh(poly(k) \log |x|)$ memory. We then provide a list of parameterized graph problems for which a single pass streaming kernelization algorithm requires at least $|E|- \Oh(1)$ bits of local memory.

\begin{definition}
 Let $Q \in \Sigma^* \times \mathbb{N}$ be a parameterized graph problem and let $c, k \in \N$. Then $Q$ has a $c$-$k$-\emph{stream obstructing graph} $G=(V, E)$ if $\forall e_i \in E$, there is a set of edges $R_i:= R(e_i) \subseteq \binom{V}{2} \setminus E$ of size $c$ such that $\forall F \subseteq E$, $(G[F\cup R_i], k) \in Q$ if and only if $e_i \in F$.
\end{definition}

In other words, each edge $e_i \in E$ could equally be critical to decide if $(G',k) \in Q$ for a graph instance $G'$ induced by a subset $F \subseteq E$ and a constant sized remainder of edges $R_i$, depending on what $R_i$ looks like. Note that $G$ may contain isolated vertices which can also be used to form edge sets $R_i$. We also consider $G$ to be a $c$-$k$-stream obstructing graph in the case that the above definition holds except that $\forall F \subseteq E$, $(G[F\cup R_i], k) \in Q$ if and only if $e_i \notin F$. We omit the proofs for this symmetrical definition in this section.

\begin{lemma} \label{lem:gen} \emph{($\bigstar$)}
 Let $Q \in \Sigma^* \times \mathbb{N}$ be a parameterized graph problem and let $c, k \in \N$. If $Q$ has a $c$-$k$-stream obstructing graph $G=(V,E)$ with $m$ edges, then a single pass streaming kernelization algorithm for $Q$ requires at least $m$ bits of local memory for instances with at most $m+c$ edges.
\end{lemma}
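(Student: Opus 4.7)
The plan is to generalize the argument sketched for \EDSk in Theorem~\ref{thm:eds} into a clean encoding/pigeonhole lower bound, using the $c$-$k$-stream obstructing graph $G=(V,E)$ as the hard instance family. The key idea is that the $2^m$ possible subsets $F\subseteq E$ give us $2^m$ logically distinct ``prefixes'' of a stream, and by the defining property of $G$ we can later test, edge by edge, whether a specific $e_i$ was included in $F$ — so any streaming kernel whose memory has fewer than $m$ bits after the prefix must collide on two such prefixes and then be fooled.

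Concretely, I would proceed as follows. Assume towards a contradiction that a 1-pass streaming kernelization $\mathcal{A}$ for $Q$ uses strictly fewer than $m$ bits of local memory on every instance with at most $m+c$ edges. Fix an arbitrary ordering $e_1,\ldots,e_m$ of $E$. For each $F\subseteq E$, feed $\mathcal{A}$ the vertex count $|V|$, the parameter $k$, and then the edges of $F$ in the induced order; let $\sigma(F)$ denote the resulting configuration of $\mathcal{A}$'s local memory. Since $\sigma(F)$ is a string of fewer than $m$ bits, it can take at most $2^{m-1}$ distinct values, whereas there are $2^m$ subsets $F$. By the pigeonhole principle there exist distinct $F_1,F_2\subseteq E$ with $\sigma(F_1)=\sigma(F_2)$. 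Pick any $e_i\in F_1\triangle F_2$ and, without loss of generality, assume $e_i\in F_1\setminus F_2$.

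Now I would continue both streams with the same suffix, namely the $c$ edges of $R_i$ (in some fixed order). Because $\mathcal{A}$ is deterministic and starts from identical memory states on identical suffixes, the final memory states coincide, and so does the kernel that $\mathcal{A}$ outputs at the end. By the kernelization guarantee the outputs must be equivalent to the true inputs, so
\[
(G[F_1\cup R_i],k)\in Q \iff (G[F_2\cup R_i],k)\in Q.
\]
But the definition of a $c$-$k$-stream obstructing graph forces $(G[F_1\cup R_i],k)\in Q$ (since $e_i\in F_1$) and $(G[F_2\cup R_i],k)\notin Q$ (since $e_i\notin F_2$), a contradiction. The instances used have $|F|+|R_i|\leq m+c$ edges, as required. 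The symmetric variant of the definition is handled by swapping the two conclusions.

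The only subtle points to be careful about are: that the bound $|V|$ and the parameter $k$ can be fixed ahead of the stream (which the model explicitly allows), that the ordering of edges within $F$ and within $R_i$ is fixed uniformly across the two executions (so the suffix really is identical), and that determinism of $\mathcal{A}$ is genuinely used — for randomized streaming kernels one would need a Yao-style argument, but the statement of the lemma is for deterministic algorithms. These are routine and I do not expect them to cause trouble; the whole proof is essentially a clean lifting of the adversarial argument behind Theorem~\ref{thm:eds}.
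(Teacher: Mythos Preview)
Your proposal is correct and is essentially the same argument as the paper's own proof: both fix the obstructing graph, use the pigeonhole principle over the $2^m$ prefix streams $\langle F\halt$ to find two subsets $F_1\neq F_2$ that leave the algorithm in the same memory state, pick $e_i\in F_1\setminus F_2$, append the common suffix $R_i=R(e_i)$, and derive a contradiction from the defining property of a $c$-$k$-stream obstructing graph. Your added remarks about fixing orderings, determinism, and the symmetric variant are fine but not needed beyond what the paper assumes.
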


The following theorem is an easy consequence of Lemma~\ref{lem:gen} for problems that, essentially, have stream obstructing graphs for all numbers~$m$ of edges. Intuitively, of course also having such graphs only for an infinite subset of~$\N$ suffices to get a similar bound.

\begin{theorem} \label{thm:gen}
 Let $Q \in \Sigma^* \times \mathbb{N}$ be a parameterized graph problem. If there exist $c, k \in \mathbb{N}$ such that for every $m \in \mathbb{N}$, $Q$ has a $c$-$k$-stream obstructing graph $G$ with $m$ edges, then a single pass streaming kernelization algorithm for $Q$ requires at least $|E| - c$ bits of local memory.
\end{theorem}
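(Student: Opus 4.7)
The plan is to derive Theorem~\ref{thm:gen} as a direct corollary of Lemma~\ref{lem:gen} by invoking the hypothesis for every value of $m$ simultaneously. Fix the constants $c, k \in \N$ given in the hypothesis and let $A$ be any single pass streaming kernelization algorithm for $Q$. For every $m \in \N$, the hypothesis supplies a $c$-$k$-stream obstructing graph $G_m = (V_m, E_m)$ with $|E_m| = m$. Applying Lemma~\ref{lem:gen} to $G_m$ shows that $A$ must use at least $m$ bits of local memory on some instance built from $G_m$ whose edge set has size at most $m + c$.

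Next I would reparameterize in terms of the edge count of the witnessing instance rather than of the obstructing graph. Write $n := m + c$ for the (upper bound on the) edge count produced by Lemma~\ref{lem:gen} when applied to $G_m$. Then $A$ is forced to use at least $m = n - c$ bits on an instance with at most $n$ edges, i.e.\ at least $|E| - c$ bits on that specific instance. Because the hypothesis furnishes such an obstructing graph for every $m \in \N$, the value $n = m+c$ ranges over all sufficiently large integers, and so for arbitrarily large edge counts there exists an instance on which $A$ consumes at least $|E| - c$ bits. This is exactly the statement of the theorem, and also justifies the informal $|E| - \Oh(1)$ phrasing used in the introduction to the subsection, since $c$ is a constant depending only on $Q$.

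I do not expect any genuine technical obstacle: the argument is a direct unfolding of Lemma~\ref{lem:gen}, and the only care needed is to keep the size-parameter $m$ used by the lemma separate from the total edge count $|E|$ of the instance presented to the algorithm. As the paragraph preceding the theorem already observes, one could relax the hypothesis to require $c$-$k$-stream obstructing graphs only for an infinite subset of $\N$; the same argument then yields the same lower bound, restricted to instances whose edge count lies in the corresponding subset shifted by $c$.
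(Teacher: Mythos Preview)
Your proposal is correct and follows essentially the same approach as the paper's own proof: both invoke Lemma~\ref{lem:gen} for every $m$, then reparameterize the bound $m$ in terms of the instance edge count $|E| \leq m+c$ to obtain the $|E|-c$ lower bound. The paper makes the same observation by noting that among instances with exactly $|E|=m+c$ edges some arise from the obstructing graph $G_m$, whereas you phrase it via the substitution $n:=m+c$; the content is identical.
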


\begin{proof}
 Let $A$ be a single pass streaming kernelization algorithm for $Q$. Assume that there is a stream obstructing graph $G_m=(V_m,E_m)$ for $Q$ with $m$ edges for every $m \in \mathbb{N}$. Then for every $m$ there is a group of instances $\mathcal{G}$ where for each $G_i=(V_i, E_i) \in \mathcal{G}$, $E_i = F \cup R_i$ for some $F \subseteq E_m$ and remainder of edges $R_i$ of size $c$, i.e. $|E_i| \leq m + c$. Let us consider all graph instances $G=(V, E)$ with exactly $|E| = m + c$ edges. Some of these
 instances are in $\mathcal{G}$, i.e. $E = E_m \cup R_i$ for some $R_i$. By Lemma \ref{lem:gen}, $A$ requires at least $m = |E| - c$ bits of local memory in order to distinguish these instances correctly. \qed \end{proof}
 
 The following corollary is a result of Theorem \ref{thm:gen} and constructions of stream obstructing graphs of arbitrary size for a variety of parameterized graph problems. We postpone these constructions to Appendix \ref{app:list}, where we will also exhibit proofs of correctness for a few of them.

\begin{corollary}
 For each of the following parameterized graph problems, a single pass streaming kernelization requires at least $|E| - \Oh(1)$ bits of local memory:
 \EDSk, \CEk, \CEDk, \CVDk, \CoVDk, \MFIk, \BERk, \FBVSk, \OCTk, \TEDk, \TVDk, \TPk, \sSPk, \BCNk.
\end{corollary}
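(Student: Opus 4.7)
The plan is to reduce the corollary directly to Theorem~\ref{thm:gen}: for each of the fourteen listed problems $Q$, it suffices to exhibit constants $c,k\in\N$ and, for every $m\in\N$ (or at least an infinite subset of $\N$), a $c$-$k$-stream obstructing graph $G_m$ with exactly $m$ edges; then Theorem~\ref{thm:gen} yields the $|E|-c=|E|-\Oh(1)$ memory lower bound, where the hidden constant is precisely the $c$ chosen in the gadget. The prototype is the star-plus-extra-edge example analysed informally for \EDSk in the preceding section, and the task is to adapt it to every problem in the list.

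I would group the problems by the flavour of gadget needed and re-use one template per group. For the cluster/fill-in/domination type problems (\EDSk, \CEk, \CEDk, \CVDk, \CoVDk, \MFIk) the core $G_m$ is a large ``uniform'' structure such as a star, a matching, or a near-clique missing a matching, and $R_i$ is one or two auxiliary edges engineered so that the optimum crosses the parameter threshold exactly when $e_i\in F$ (or $e_i\notin F$, using the symmetric version of the definition permitted in the paper). For bipartization/feedback problems (\BERk, \FBVSk, \OCTk) the core is a disjoint union of short paths and $R_i$ closes the $i$-th path into a cycle of the relevant parity only if $e_i$ is present. For the triangle and packing variants (\TEDk, \TVDk, \TPk, \sSPk, \BCNk) each local gadget is a cherry or near-star whose completion into a triangle or $K_{1,s}$ is delivered by $R_i$ only if $e_i$ was streamed. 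In all cases $c$ and $k$ stay at very small constants (essentially $1,2,3$), independent of $m$, so Theorem~\ref{thm:gen} applies uniformly.

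The main technical obstacle, and the reason my plan is more than ``invoke Theorem~\ref{thm:gen}'', is that the stream-obstructing definition demands the iff for \emph{every} $F\subseteq E$, including $F=\emptyset$ and tiny $F$. The naive unadorned star used to motivate \EDSk fails precisely here: when $F=\emptyset$ the instance $(G[R_i],k)$ is trivially \textsc{yes} and collapses the iff. The remedy, to be used throughout the appendix constructions, is to embed in $R_i$ a small ``trap'' of auxiliary edges that forces $(G[R_i],k)$ onto a canonical side of the question, so that the edges actually streamed in $F$ are the only way to switch the answer. Determining a correct trap requires problem-specific case analysis on how $F$ interacts with $R_i$, and this case analysis is the only nontrivial work.

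Concretely, I would write out the full verification of the iff for two or three representative problems (for instance \EDSk with a star plus a fixed disjoint ``dummy'' edge to resolve the $F=\emptyset$ degeneracy, \CEk using a near-clique missing a matching, and \TPk using a cherry-packing), present the gadget pictures and the $R_i$ for each remaining problem, and point out the trivial variations that make each case go through. This matches the division of labour announced by the paper (proofs of correctness in Appendix~\ref{app:list} only for a few representatives), and once every listed $Q$ has an $m$-edge $c$-$k$-stream obstructing graph for arbitrarily large $m$, the corollary is immediate from Theorem~\ref{thm:gen}.
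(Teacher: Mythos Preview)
Your overall plan is exactly the paper's: invoke Theorem~\ref{thm:gen} by exhibiting, for each problem and every $m$, a $c$-$k$-stream obstructing graph with $m$ edges. Where you diverge is in the concrete gadgets, and here the paper is both simpler and more uniform than what you sketch. For every problem except \EDSk the paper takes $E$ to be a \emph{perfect matching} on $2m$ vertices (plus one to three isolated auxiliary vertices) and sets $k=0$ for all the deletion/editing problems. The point is that any subset $F\subseteq E$ of a matching is itself a matching, hence already a cluster graph, already acyclic, already bipartite, already triangle-free, already chordal, already a cograph. So $(G[F],0)$ is trivially \yes for every $F$, including $F=\emptyset$; the degeneracy you worry about never arises, and no ``trap'' is needed. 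The set $R(e_i)$ then supplies one to three edges through the auxiliary vertices that, together with $e_i=\{u,v\}$ alone, complete the minimal forbidden substructure: a $P_3$ for \CEk/\CEDk/\CVDk, a triangle for \BERk/\FBVSk/\OCTk/\TEDk/\TVDk (and for \TPk with $k=1$), a $P_4$ for \CoVDk, a $C_4$ for \MFIk, a $K_{1,s}$ for \sSPk (with $k=1$), and a degree-two vertex for \BCNk. Because matching edges are pairwise disjoint, $R(e_i)$ meets $F$ only in $e_i$, and the iff for all $F$ is immediate---no case analysis.

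Some of your specific gadgets would not satisfy the definition. A near-clique minus a matching for \CEk has the fatal defect that arbitrary subsets $F\subseteq E$ are no longer structurally trivial, so controlling $(G[F\cup R_i],k)$ uniformly over all $F$ is hopeless. ``Closing short paths into cycles'' for \FBVSk/\OCTk works only if the paths are single edges---i.e., you are back at the matching---since with a $P_3$ base the presence of the cycle depends on \emph{two} edges of $E$ simultaneously, not on the single $e_i$ the definition requires. Your instinct about the $F=\emptyset$ corner case is correct for the star gadget used for \EDSk (and the paper is slightly careless there as well: a single edge does admit an edge dominating set of size $1$), but for all the other thirteen problems the matching-plus-$k{=}0$ template removes the issue entirely, which is the main simplification you are missing.
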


\section{2-pass kernel for Edge Dominating Set} \label{sec:2pass}
Despite the previously shown lower bound of $m-1$ bits for a single pass kernel, there is a space efficient streaming kernelization algorithm for \textsc{Edge Dominating Set}($k$) if we allow it to make a pass over the input stream twice. We will first describe a single step of the streaming kernelization during the first pass. This is effectively a single pass kernel for finding a $2k$-vertex cover. After Step $t$ the algorithm has seen a set $A_t \subseteq E$. Some subset $A'_t \subseteq A_t$ of edges is stored in memory. Let us consider Step $t+1$ where a new edge $e = \{u, v\}$ arrives.

\smallskip

 1. Count the edges in $A'_{t}$ that are incident with $u$; do the same for~$v$.
 
 2. Let $A'_{t+1} = A'_t$ if either of these counts is at least $2k+1$.
 
 3. Otherwise, let $A'_{t+1} = A'_t \cup \{e\}$.
 
 4. If $|A'_{t+1}| > 4k^2+2k$, then return a \no instance.

\begin{lemma}\label{lem:vc} \emph{($\bigstar$)}
 After processing any set $A_t$ of edges on the first pass over the input stream the algorithm has a set $A'_t \subseteq A_t$ such
 that any set $S$ of at most $2k$ vertices is a vertex cover for $G[A_t]$ if and only if $S$ is a vertex cover for $G[A'_t]$.
\end{lemma}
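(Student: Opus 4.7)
The plan is to proceed by induction on the step number $t$, with the base case $A_0 = A'_0 = \emptyset$ being trivial since both $G[A_0]$ and $G[A'_0]$ are edgeless and every vertex set covers them vacuously. For the inductive step, let $e = \{u,v\}$ be the edge arriving at step $t+1$; the algorithm branches into two cases, which I will handle separately.

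In the case where $e$ is stored, we have $A_{t+1} = A_t \cup \{e\}$ and $A'_{t+1} = A'_t \cup \{e\}$. Since adjoining the same single edge to both edge sets preserves the equivalence, because a set $S$ covers $E' \cup \{e\}$ iff $S$ covers $E'$ and covers $e$, the inductive hypothesis applied to $A_t, A'_t$ immediately gives the claim for $A_{t+1}, A'_{t+1}$.

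The interesting case is when $e$ is discarded, i.e.\ $A'_{t+1} = A'_t$, which by the algorithm's rule happens only when some endpoint of $e$, say $u$, already has at least $2k+1$ incident edges in $A'_t$. The forward direction is immediate: any $S$ that covers $A_{t+1} \supseteq A_t$ in particular covers $A_t$, and by the inductive hypothesis covers $A'_t = A'_{t+1}$. For the backward direction, let $S$ with $|S| \leq 2k$ cover $A'_{t+1} = A'_t$; by the inductive hypothesis $S$ covers $A_t$, so it only remains to show that $S$ covers $e$. The $\geq 2k+1$ edges of $A'_t$ incident to $u$ all have $u$ as a common endpoint but otherwise have $\geq 2k+1$ distinct other endpoints; if $u \notin S$ then $S$ would have to contain all of those other endpoints, forcing $|S| \geq 2k+1$ and contradicting $|S| \leq 2k$. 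Hence $u \in S$ and $S$ covers $e = \{u,v\}$.

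The whole argument rests on this pigeonhole-style forcing step in the discard case, which is also the conceptual justification for the threshold $2k+1$ chosen by the algorithm; this is the only real obstacle and is routine once the bound on $|S|$ is exploited. No subtlety arises from the \no-output branch (Step~4) since the lemma is a statement about the $A'_t$ the algorithm actually maintains; as a side remark, that branch is correct because every vertex in $A'_t$ has degree at most $2k+1$, so any vertex cover $S$ of size $\leq 2k$ would yield $|A'_t| \leq |S|(2k+1) \leq 4k^2+2k$.
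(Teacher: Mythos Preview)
Your proof is correct and follows essentially the same approach as the paper's: induction on $t$, with the only nontrivial case being the discard branch, handled by the observation that the $2k+1$ edges at $u$ force $u\in S$ for any cover of size at most $2k$. Your presentation is slightly more explicit (spelling out the distinct-other-endpoints pigeonhole and treating the stored-edge case separately), but the argument and the side remark on the \no-output branch coincide with the paper's proof.
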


Let $A'$ be the edges stored after the first pass. If there are more than $2k$ vertices with degree $2k+1$ in $G[A']$ then the algorithm returns a \no instance. We will continue with a description of a single step during the second pass. 
After Step $t$ the algorithm has revisited a set $B_t \subseteq E$. Some subset $B'_t \subseteq B_t$ of edges is stored along with
$A'$. Now, let us consider Step $t+1$ where the edge $e = \{u, v\}$ is seen for the second time.

\smallskip

 1. Let $B'_{t+1} = B'_t \cup \{e\}$ if $u, v \in V(A')$ and $e \notin A'$.
 
 2. Otherwise, let $B'_{t+1} = B'_t$.
 
 \smallskip

\noindent Let $B'$ be the edges stored during the second pass. The algorithm will return $G[A' \cup B']$, which is effectively $G[V(A')]$, after both passes have been 
processed without returning a \no instance.

\begin{lemma}\label{lem:eds} \emph{($\bigstar$)}
 After processing both passes the algorithm has a set $A' \cup B' \subseteq E$ such that there is an edge dominating set 
 $S$ of size at most $k$ for $G$ if and only if there is an edge dominating set $S'$ of size at most $k$ for $G[A' \cup B']$.
\end{lemma}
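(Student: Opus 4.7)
The output graph can be rewritten as $G[A' \cup B'] = G[C]$ where $C := V(A')$, since during the second pass $B'$ collects exactly those $G$-edges whose endpoints both lie in $C$ but that were not already stored in $A'$. Throughout I use the standard equivalence that a set of edges $T$ is an edge dominating set of a graph $H$ iff $V(T)$ is a vertex cover of $H$, which lets me argue on the vertex-cover side. Two structural facts about $C$ drive the proof. (i) $C$ is a vertex cover of $G$: any edge $\{u,v\} \in E \setminus A'$ was rejected in pass~1 because one of its endpoints was saturated, i.e., had $\geq 2k+1$ incident edges in the current $A'$, and such a saturated endpoint must lie in $C$. Writing $H \subseteq C$ for the set of vertices of degree $\geq 2k+1$ in $G[A']$, the same observation shows $N_G(w) \subseteq H$ for every $w \in V(G) \setminus C$. (ii) Every edge dominating set $T$ of $G$ or of $G[C]$ of size $\leq k$ satisfies $H \subseteq V(T)$: otherwise some $v \in H \setminus V(T)$ would force all $\geq 2k+1$ of its $G[A']$-neighbours into $V(T)$, contradicting $|V(T)| \leq 2k$. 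On yes-instances the algorithm completes both passes without returning \no: Lemma~\ref{lem:vc} yields a vertex cover of $G[A'_t]$ of size $\leq 2k$ at every step, and combined with the construction-imposed maximum degree $2k+1$ in $G[A']$ this bounds $|A'_t| \leq 2k(2k+1) = 4k^2+2k$ throughout, while (ii) gives $|H| \leq 2k$.

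For the $(\Leftarrow)$ direction, let $S'$ be an edge dominating set of $G[C]$ with $|S'| \leq k$. Then $V(S')$ is a vertex cover of $G[C]$ and, by (ii), contains $H$. Every edge of $G$ outside $G[C]$ joins some vertex of $V(G) \setminus C$ to an element of $H \subseteq V(S')$ by (i), hence is covered; so $V(S')$ is a vertex cover of the whole of $G$, and $S'$ is an edge dominating set of $G$ of size $\leq k$.

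The $(\Rightarrow)$ direction is the main obstacle: given an edge dominating set $S$ of $G$ with $|S| \leq k$, I transform it into $S^*$ with $V(S^*) \subseteq C$, $|S^*| = |S|$, and $S^*$ still an edge dominating set. I proceed greedily: while some $\{u,v\} \in S$ has $v \notin C$, fact (i) forces $u \in H$, so $u$ has $\geq 2k+1$ neighbours in $C$ via $A'$-edges; since $|V(S)| \leq 2k$, at least one such neighbour $x$ lies outside $V(S) \setminus \{u\}$, and I replace $\{u,v\}$ by $\{u,x\}$. The key verification is that the modified $V(S)$ is still a vertex cover of $G$: the removed edge $\{u,v\}$ is covered through $u$, while any other edge previously covered only via $v$ has its other endpoint in $N_G(v) \subseteq H \subseteq V(S)$ by (i) and (ii), so it remains covered. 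Each iteration strictly decreases the number of $S$-edges with an endpoint outside $C$, so after at most $k$ steps I obtain $S^*$ with $V(S^*) \subseteq C$; viewed inside $G[C]$, $S^*$ is the desired edge dominating set of size $\leq k$.
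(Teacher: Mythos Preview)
Your proof is correct and follows essentially the same plan as the paper's: translate edge dominating sets to vertex covers, and for the forward direction replace each edge of $S$ with an endpoint outside $C=V(A')$ by an $A'$-edge through its high-degree endpoint. The paper phrases the replacement as a single case split over the edges of $S$, while you run it as a greedy loop; these amount to the same thing, and you are actually a bit more careful than the paper in verifying that $V(S^*)$ remains a vertex cover after each swap.

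The one genuine difference is in the $(\Leftarrow)$ direction. The paper simply observes that $V(S')$ is a $2k$-vertex cover of $G[A']$ and invokes Lemma~\ref{lem:vc} to conclude it is a $2k$-vertex cover of $G$, finishing in one line. You instead derive the structural facts (i) $C$ is a vertex cover of $G$, $N_G(w)\subseteq H$ for $w\notin C$, and (ii) $H\subseteq V(T)$ for any small EDS $T$, and combine them. Both routes are valid; the paper's is shorter, while yours is more self-contained and reuses the same facts in both directions. A tiny expositional slip: when you write ``fact~(i) forces $u\in H$'' in the $(\Rightarrow)$ loop, what you actually need is the sharpened observation $N_G(v)\subseteq H$ for $v\notin C$ (fact~(i) alone only gives $u\in C$); you stated this observation earlier, so the argument stands. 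It would also be worth noting explicitly that $H\subseteq V(S)$ is preserved across replacements (only the vertex $v\notin C\supseteq H$ can be dropped), since you rely on this invariant when you re-apply~(ii) in later iterations.
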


\begin{theorem} \label{thm:2pass} \emph{($\bigstar$)}
 \EDSk admits a two-pass streaming kernelization algorithm which, using $\Oh(k^3 \log n)$ bits of local memory and
 $\Oh(k^2)$ time in each step, returns an equivalent instance of size $\Oh(k^3 \log k)$.
\end{theorem}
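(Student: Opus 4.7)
The plan is to combine Lemmas \ref{lem:vc} and \ref{lem:eds} with a careful accounting of the memory used by the two passes. Correctness: if both passes complete without rejection, Lemma \ref{lem:eds} immediately gives that $G[A' \cup B']$ is an equivalent instance. The algorithm rejects in exactly two circumstances, namely during the first pass when $|A'_t|$ exceeds $4k^2 + 2k$, or between passes when more than $2k$ vertices of $V(A')$ have degree $2k+1$ in $G[A']$. In both cases, Lemma \ref{lem:vc} implies $G$ has no vertex cover of size at most $2k$, hence no edge dominating set of size at most $k$, so the \no answer is correct.

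For the local memory bound, the first pass only stores $A'_t$ together with some incidence information, totalling $\Oh(k^2 \log n)$ bits. For the second pass, the key structural observation is the following: after the first pass $V(A')$ splits into $H$, the vertices of degree exactly $2k+1$ in $G[A']$ (with $|H| \leq 2k$), and $L = V(A') \setminus H$. Because the first-pass rule only refuses to store an incoming edge when one of its endpoints already has $2k+1$ edges in memory, every edge $\{u,v\} \in E$ with $u, v \in L$ was actually added to $A'$. Consequently, the edges of $G[V(A')]$ split into edges inside $H$ (at most $\binom{2k}{2}$), edges inside $L$ (all in $A'$, hence at most $|A'| = \Oh(k^2)$), and edges between $H$ and $L$ (at most $|H|\cdot|L| = 2k \cdot \Oh(k^2) = \Oh(k^3)$, using $|L| \leq 2|A'| = \Oh(k^2)$). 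This gives $|E(G[V(A')])| = \Oh(k^3)$, so storing $A' \cup B'$ uses $\Oh(k^3 \log n)$ bits throughout the second pass.

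The per-step running time is $\Oh(k^2)$ in either pass by naive scans: in the first pass I count how many edges of $A'_t$ are incident to each endpoint of the incoming edge in $\Oh(|A'_t|) = \Oh(k^2)$ time; in the second pass I check whether both endpoints lie in $V(A')$ and whether the edge is already in $A'$, both in $\Oh(|A'|) = \Oh(k^2)$ time. For the kernel size, the returned graph $G[V(A')]$ has $\Oh(k^2)$ vertices, so after renumbering them into $[|V(A')|]$ each edge is encoded using $\Oh(\log k)$ bits; combined with the $\Oh(k^3)$ edge bound this yields $\Oh(k^3 \log k)$ bits. The most delicate piece is the structural argument about edges within $L$, which is what lets one break past the naive $\binom{|V(A')|}{2} = \Oh(k^4)$ upper bound and obtain the advertised $\Oh(k^3)$.
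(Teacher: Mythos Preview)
Your proof is correct and follows essentially the same approach as the paper: both split $V(A')$ into the high-degree part $H$ (with $|H|\le 2k$) and the low-degree part $L$, observe that every $L$--$L$ edge of $G$ was kept in $A'$, and then bound the three edge types separately to get $|A'\cup B'|=\Oh(k^3)$. Your treatment of the two rejection cases is in fact slightly more explicit than the paper's, but the core counting argument and the per-step time analysis are the same.
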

  
If the algorithm stores a counter for the size of $A'_t$ and a tree structure $\mathbb{T}$ in which it maintains the number of sets (edges) in $A'_t$ that are a superset of $C \subseteq \binom{V}{\leq 2}$ as described in Section \ref{sec:singlepass}, then the operations in each step can be performed in $\Oh(\log k)$ time. We give the following corollary and omit the proof.

\begin{corollary}
 \EDSk admits a two-pass streaming kernelization algorithm which, using $\Oh(k^3 \log n)$ bits of local memory and $\Oh(\log k)$ time in each step, returns an equivalent instance of size $\Oh(k^3 \log k)$.
\end{corollary}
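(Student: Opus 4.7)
The plan is to augment the two-pass algorithm of Theorem~\ref{thm:2pass} with an auxiliary tree~$\mathbb{T}$ of depth~$2$, entirely analogous to the structure described after Corollary~\ref{cor:dhs} and specialised here to edges (i.e.\ sets of size~$2$). Correctness of the overall procedure and of the returned instance~$G[A'\cup B']$ are already handled by Lemmas~\ref{lem:vc} and~\ref{lem:eds}, so the task reduces to showing that, with~$\mathbb{T}$ in place, each step runs in $\Oh(\log k)$ time while the total space grows by only a constant factor, remaining $\Oh(k^3\log n)$ bits.

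I would have the root of~$\mathbb{T}$ represent $C=\emptyset$ and store the counter $|A'_t|$; its children are keyed by vertices $v\in V(A'_t)$ and each stores the degree of~$v$ in $G[A'_t]$; and every stored edge~$\{u,v\}\in A'_t$ is represented by a depth-$2$ node hung beneath both~$\{u\}$ and~$\{v\}$. The set of children at every internal node is organised as a self-balancing BST keyed by vertex identifier, so that lookup or insertion in a collection of $h$ children costs $\Oh(\log h)$ word operations. Because $|V(A'_t)|\le 2|A'_t|=\Oh(k^2)$ and each vertex has degree at most $2k+1$ in $G[A'_t]$, every dictionary that will ever be queried has $\Oh(k^2)$ children, so every access takes $\Oh(\log k)$ time.

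For each arriving edge $e=\{u,v\}$ during the first pass I would: locate~$\{u\}$ and~$\{v\}$ in the root's dictionary and read their counters; if neither has reached~$2k+1$, install depth-$1$ nodes for~$\{u\}$ and/or~$\{v\}$ if they are absent, attach the depth-$2$ node for~$e$ beneath both, update the three relevant counters, and finally read the root counter to check against~$4k^2+2k$. During the second pass, the tree for~$A'$ is reused (now read-only) and an analogous tree is maintained for~$B'$: given $e=\{u,v\}$, three lookups decide whether $\{u\},\{v\}\in V(A')$ and whether the depth-$2$ node for~$e$ exists in the $A'$-tree; if $e\notin A'$ and both endpoints lie in $V(A')$, then $e$ is inserted into the $B'$-tree. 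Each step thus performs $\Oh(1)$ dictionary accesses at cost $\Oh(\log k)$ apiece, which is the claimed per-step bound.

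The one point that requires care---and what I regard as the main obstacle---is verifying that every test performed by the naive $\Oh(k^2)$-time step in Theorem~\ref{thm:2pass} (``how many edges in $A'_t$ are incident with~$u$?'', ``is $e\in A'$?'', ``is $u\in V(A')$?'') truly corresponds to reading a single counter or to the mere presence or absence of one node in~$\mathbb{T}$. Once this correspondence is spelled out, the invariants preserved by the original two-pass algorithm carry over unchanged, correctness follows from Theorem~\ref{thm:2pass}, and the output size remains $\Oh(k^3\log k)$.
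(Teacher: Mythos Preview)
Your proposal is correct and matches the paper's intended approach: the paper explicitly omits the proof but states beforehand that one should store a counter for $|A'_t|$ and the tree structure $\mathbb{T}$ from Section~\ref{sec:singlepass} (specialised to sets of size~$2$) to bring the per-step cost down to $\Oh(\log k)$, which is exactly the construction you carry out. Your additional $B'$-tree in the second pass is not strictly needed (the second-pass tests only query $A'$, and edges can simply be appended to $B'$), but it does no harm to the bounds.
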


\section{Space lower bounds for multi-pass streaming kernels}\label{sec:multi}

In this section we will show lower bounds for multi-pass streaming kernels for \textsc{Cluster Editing}($k$) and \textsc{Minimum Fill-In}($k$). Similar to \textsc{Edge Dominating Set}($k$), it is difficult to return a trivial answer for these problems when the local memory exceeds a certain bound at some point during the input stream. Additional edges in the stream may turn a \no instance into a \yes instance and vice versa, which makes single pass streaming kernels infeasible. Although there is a 2-pass streaming kernel for \textsc{Edge Dominating Set}($k$), we will show that a $t$-pass streaming kernel for \textsc{Cluster Editing}($k$) requires at least $(n-2)/2t$ bits of local memory for instances with $n$ vertices. As a consequence, $\Omega(n)$ bits are required when a constant number of passes 
are allowed. Furthermore, $\Omega(n/\log n)$ passes are required when the streaming kernel uses at most $\Oh(\log n)$ bits of memory. We show a similar result for \textsc{Minimum Fill-In}($k$). 

\medskip
\noindent
\introduceparameterizedproblem{\CEk}{A graph $G = (V, E)$ and $k \in \N.$}{$k.$}{Can we add and/or delete at most~$k$ edges such that~$G$ becomes a disjoint union of cliques?}

Let us consider the following communication game with two players, $P_1$ and $P_2$. Let $N$ be a set of $n'$ vertices and let $u, v \notin N$. The players are given a subset of vertices, $V_1 \subseteq N$ and $V_2 \subseteq N$ respectively. Let $C(V_1)$ denote the edges of a clique on $V_1 \cup \{u\}$. Furthermore, let $S(V_2)$ denote the edges of a star with center vertex $v$ and leaves $V_2 \cup \{u\}$. The object of the game is for the players to determine if $G = (N\cup\{u, v\}, C(V_1)\cup S(V_2))$ is a disjoint union of cliques. The cost of the protocol for this game is the number of bits communicated between the players such that they can provide the answer. We can provide a lower bound for this cost by using the notion of fooling sets as shown in the following lemma.

\begin{lemma} \emph{(\cite{arora2009computational})} \label{lem:fool}
 A function $f: \{0,1\}^{n'} \times \{0, 1\}^{n'}$ has a size $M$ fooling set if there is an $M$-sized subset 
 $F \subseteq \{0, 1\}^{n'} \times \{0, 1\}^{n'}$ and value $b \in \{0, 1\}$ such that,
 
 \smallskip
 
 (1) for every pair $(x, y) \in S$, $f(x, y) = b$ 
  
 (2) for every distinct $(x, y), (x', y') \in F$, either $f(x, y') \neq b$ of $f(x', y) \neq b$.
 
 \smallskip

 \noindent If $f$ has a size-$M$ fooling set then $C(f) \geq \log M$ where $C(f)$ is the minimum number of bits communicated in a two-party protocol for $f$.
\end{lemma}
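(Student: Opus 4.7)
My plan is to follow the standard combinatorial-rectangle argument. Fix any deterministic two-party protocol $\Pi$ for $f$ whose cost equals $C(f)$, and let $c := C(f)$ denote the maximum number of bits exchanged on any input. I would first recall (or briefly re-derive) the standard structural fact that a protocol of cost $c$ partitions the input space $\{0,1\}^{n'} \times \{0,1\}^{n'}$ into at most $2^c$ \emph{combinatorial rectangles}: for every leaf $\ell$ of the protocol tree, the set $R_\ell$ of inputs reaching $\ell$ has the form $A_\ell \times B_\ell$ for some $A_\ell, B_\ell \subseteq \{0,1\}^{n'}$, and $\Pi$ outputs the same value on every input in $R_\ell$. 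This follows by induction on the depth of the tree, using that at each node only one of the two players speaks and that player's message depends only on her own input.

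The core step is then to show that any two distinct pairs from the fooling set $F$ must lie in different rectangles among those labeled with output $b$. Suppose, for contradiction, that distinct $(x,y), (x',y') \in F$ both lie in some rectangle $R_\ell = A_\ell \times B_\ell$ on which $\Pi$ outputs $b$. Then $x,x' \in A_\ell$ and $y,y' \in B_\ell$, so the ``crossed'' pairs $(x,y')$ and $(x',y)$ also lie in $A_\ell \times B_\ell$, and therefore $\Pi$ outputs $b$ on both of them. Since $\Pi$ is correct, this forces $f(x,y') = f(x',y) = b$, directly contradicting condition~(2) in the definition of a fooling set.

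Hence the $M$ pairs in $F$ inject into the set of $b$-monochromatic rectangles produced by $\Pi$, of which there are at most $2^c$, so $2^c \geq M$, i.e.\ $C(f) = c \geq \log M$. I expect the only subtle part to be a clean statement of the rectangle property, since everything afterwards is a one-line contradiction; given that the paper only cites this lemma from \cite{arora2009computational}, I would keep the rectangle lemma as a very brief paragraph (or cite it directly) and devote the main sentence to the ``crossing'' argument, which is the real content.
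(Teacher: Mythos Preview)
Your argument is correct and is exactly the standard fooling-set proof. Note, however, that the paper does not actually prove this lemma: it is stated with a citation to \cite{arora2009computational} and used as a black box, so there is no in-paper proof to compare against. Your write-up simply supplies the textbook argument that the paper omits.
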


Let $f$ be a function modeling our communication game where $f(V_1, V_2) = 1$ if $G$ forms a 
disjoint union of cliques and $f(V_1, V_2) = 0$ otherwise. We provide a fooling set for $f$ in the following lemma.

\begin{lemma} \label{lem:foolce}
$f$ has a fooling set $F = \big\{(W, W) \mid W \subseteq N \big\}$.
\end{lemma}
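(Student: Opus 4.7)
The plan is to verify both conditions of the fooling-set definition from Lemma~8 with $b = 1$. First I would fix any $W \subseteq N$ and show $f(W,W) = 1$: with $V_1 = V_2 = W$ the edge set of $G$ consists of $C(W)$ together with $S(W)$, so $W \cup \{u\}$ is a clique via $C(W)$ and $v$ is adjacent to every element of $W \cup \{u\}$ via $S(W)$. Therefore $W \cup \{u, v\}$ induces a clique, while the remaining vertices $N \setminus W$ are isolated and are hence singleton cliques. Thus $G$ is a disjoint union of cliques and $f(W,W) = 1$, establishing condition (1).

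For condition (2) I would take two distinct pairs $(W_1, W_1), (W_2, W_2) \in F$ and prove the stronger claim that $f(W_1, W_2) = 0$, which certainly suffices. Since $W_1 \neq W_2$, there exists some $w \in W_1 \triangle W_2$. If $w \in W_1 \setminus W_2$, then with $V_1 = W_1$ and $V_2 = W_2$ the edges $\{u, w\} \in C(W_1)$ and $\{u, v\} \in S(W_2)$ are both present, but $\{v, w\} \notin S(W_2)$ since $w \notin W_2$, so $\{v, u, w\}$ induces a $P_3$ centered at $u$. If instead $w \in W_2 \setminus W_1$, then $\{v, w\}, \{v, u\} \in S(W_2)$ are present while $\{u, w\} \notin C(W_1)$, giving an induced $P_3$ centered at $v$. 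In either case $G$ contains an induced $P_3$ and so is not a disjoint union of cliques, yielding $f(W_1, W_2) = 0$ and condition (2).

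The main (and rather minor) obstacle is keeping track of which vertex, $u$ or $v$, should play the role of the middle vertex of the forbidden induced $P_3$ depending on which side of the symmetric difference the witness $w$ lies on; beyond that the argument is just the standard characterization of disjoint unions of cliques as exactly the $P_3$-free graphs, plus the observation that no edges among vertices of $N$ are created by $S(V_2)$ and no edges incident with $v$ are created by $C(V_1)$, so no "hidden" edges can spoil the induced $P_3$. Since $|F| = 2^{n'}$, the fooling set has the required size.
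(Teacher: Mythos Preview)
Your proof is correct and follows essentially the same approach as the paper: verify that each diagonal pair $(W,W)$ gives a cluster graph (a clique on $W\cup\{u,v\}$ plus isolated vertices), and then exhibit an induced $P_3$ on $\{u,v,w\}$ for a witness $w$ in the symmetric difference to show the mixed inputs fail. The only cosmetic difference is that the paper uses a ``without loss of generality'' to show just one of $f(W',W)=0$ or $f(W,W')=0$, whereas you do the explicit two-case split on which side of $W_1\triangle W_2$ the witness lies and thereby establish the slightly stronger fact that $f(W_1,W_2)=0$ whenever $W_1\neq W_2$.
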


\begin{proof}
For every $W \subseteq N$ we have $G = (N \cup \{u, v\}, C(W) \cup S(W))$ in which there is a clique on vertices $W \cup \{u, v\}$ while the vertices in $N \setminus W$ are completely isolated and thus form cliques of size 1, i.e. $f(W, W)=1$ for every $(W, W) \in F$. Now let us consider pairs $(W, W), (W', W') \in F$. We must show that either $f(W, W') = 0$ or $f(W', W) = 0$.
Clearly $W \neq W'$ since $(W, W) \neq (W', W')$. Let us assume w.l.o.g. that $W \setminus W' \neq \emptyset$, i.e. there is a vertex $w \in W \setminus W'$. Then $\{v, w\}, \{v, u\} \in S(W)$ since $w \in W$. However, $\{u, w\} \notin C(W')$ since $w \notin W'$ and by definition also $\{u, w\} \notin S(W)$ since $S(W)$ is a star with center $v \notin \{u,  w\}$. Thus, $G = (N \cup \{u, v\}, C(W') \cup S(W))$ is not a disjoint union of cliques, i.e., $f(W', W) = 0$ and the lemma holds. \qed
\end{proof}

The size of $F$ is $2^{n'}$, implying by Lemma \ref{lem:fool} that the protocol for $f$ needs at least $n'$ bits of communication. Intuitively, if we use less than $n'$ bits, then by the pigeonhole principle there must be some pairs $(W, W), (W', W') \in F$ for which the protocol is identical. Then the players cannot distinguish between the cases $(W, W), (W, W'), (W', W), (W', W')$, i.e. for each case the same answer will be given and thus the protocol is incorrect. We can now prove the following theorem by considering how the players could exploit knowledge of a multi-pass kernel for \CEk with small local memory in order to beat the lower bound of the communication game.

\begin{theorem} \label{thm:ce}
 A streaming kernelization algorithm for \CEk requires at least $(n-2)/2t$ bits of local memory for instances with $n$ vertices if it is allowed to make $t$ passes over the input stream.
\end{theorem}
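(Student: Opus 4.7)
The plan is to convert any $t$-pass streaming kernel $A$ for \CEk into a two-party protocol for the function $f$ from the preceding communication game and then invoke the $n - 2$ bit lower bound obtained from Lemmas~\ref{lem:fool} and~\ref{lem:foolce}. Fix the parameter $k = 0$; then the graph $G = (N \cup \{u, v\}, C(V_1) \cup S(V_2))$ is a \yes instance of \CEk if and only if $G$ is already a disjoint union of cliques, i.e.\ iff $f(V_1, V_2) = 1$. Hence a correct kernelization of $G$ at $k = 0$ determines the value of $f(V_1, V_2)$.

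Suppose $A$ uses $s$ bits of local memory. Given their private inputs, players $P_1$ and $P_2$ cooperatively simulate $A$ on the stream that presents $C(V_1)$ before $S(V_2)$, with this ordering reused on every pass. Each player supplies only the portion of the stream corresponding to their own edges and runs $A$ during that portion. Whenever the stream changes ownership between the players, the currently active player transmits the $s$-bit memory contents of $A$ to the other, who then resumes the simulation. Because the subsequent behaviour of any streaming algorithm depends only on its present memory state together with the remaining input, this is a faithful simulation of $A$. After the final pass, the player who last received the memory holds the kernel, decides it using $k = 0$, and broadcasts the result, so both players learn $f(V_1, V_2)$.

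Across the $t$ passes at most $2t$ memory transfers occur (one ownership switch within each pass and one between each pair of consecutive passes, with the single answer bit absorbed into the last transfer), so the protocol exchanges at most $2ts$ bits. Combining this with the fooling-set lower bound $C(f) \geq n - 2$ of Lemmas~\ref{lem:fool} and~\ref{lem:foolce} yields $2ts \geq n - 2$, and therefore $s \geq (n - 2)/(2t)$, which is the claimed bound. The main conceptual step is recognising that $A$'s entire state at a stream boundary is captured by its $s$-bit memory snapshot, so that one message of that size suffices to hand off the simulation between players; the counting of transfers and the choice $k = 0$ are routine once this is in place.
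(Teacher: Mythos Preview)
Your proof is correct and follows essentially the same approach as the paper: fix $k=0$, have $P_1$ and $P_2$ simulate the $t$-pass kernelization $A$ by feeding it $C(V_1)$ and $S(V_2)$ in alternation and exchanging the $s$-bit memory state at each ownership switch, then combine the resulting $2ts$-bit protocol with the $n-2$ communication lower bound from Lemmas~\ref{lem:fool} and~\ref{lem:foolce}. Your exposition is slightly more explicit than the paper's about why the memory snapshot suffices to hand off the simulation and about the precise count of transfers, but the argument is the same.
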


\begin{proof}
 Let us assume that the players have access to a multi-pass streaming kernelization algorithm $A$ for \CEk.
 They can then use $A$ to solve the communication game for $|N| = n' = n-2$ by simulating passes over an input stream in the following way. First, $P_1$ initiates $A$ with budget $k=0$. To let $A$ make a pass over $C(V_1) \cup S(V_2)$, $P_1$ feeds $A$ with partial input stream $\langle C(V_1) \halt$. It then sends the current content of the local memory of $A$ to $P_2$, which is then able to resume $A$ and feeds it with $\resume S(V_2) \rangle$. In order to let $A$ make multiple passes, $P_2$ can send the local memory content back to $P_1$. Finally, when enough passes have been made an instance can be requested from $A$ for which the answer is \yes if and only if $f(V_1, V_2) = 1$.
 
 Now suppose $A$ is a $t$-pass streaming kernel with less than $(n-2)/2t$ bits of local memory for instances with $n$ vertices. In each pass the local memory is transmitted between $P_1$ and $P_2$ twice. Then in total the players communicate less than $n-2 = n'$ bits of memory. This is a contradiction to the consequence of Lemmata \ref{lem:fool} and \ref{lem:foolce}. Therefore $A$ requires at least $(n-2)/2t$ bits. \qed
\end{proof}

Note that this argument also holds for the \textsc{Cluster Deletion}($k$) and \textsc{Cluster Vertex Deletion}($k$)
problems where we are only allowed to delete $k$ edges, respectively vertices to obtain a disjoint union of cliques.

\medskip
\noindent
\introduceparameterizedproblem{\MFIk}{A graph $G = (V,E)$ and $k \in \N$.}{$k$.}{Can we add at most~$k$ edges such that~$G$ becomes chordal, i.e. $G$ does not contain an induced cycle of length 4?}

Let us consider the following communication game with two players, $P_1$ and $P_2$. Let $N$ be a set of $n$ vertices and let $p, u, v \notin N$. The players are given a subset of vertices, $V_1 \subseteq N$ and $V_2 \subseteq N$ respectively. Let $S_u(V_1)$ denote the edges of a star with center vertex $u$ and leaves $V_1 \cup \{p\}$. Furthermore, let $S_v(V_2)$ denote the edges of a star with center vertex $v$ and leaves $V_2 \cup \{p\}$. The object of the game is for the players to determine if $G = (N \cup \{p, u, v\}, S_u(V_1) \cup S_v(V_2))$ is a chordal graph. Let $f$ be a function modeling this communication game, i.e. $f(V_1, V_2) = 1$ if $G$ is chordal and $f(V_1, V_2) = 0$ otherwise. We provide a fooling set for $f$ in the following lemma.

\begin{lemma} \label{lem:foolmfi}
 $f$ has a fooling set $F = \big\{(W, N \setminus W) \mid W \subseteq N \big\}$.
\end{lemma}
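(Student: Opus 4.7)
The proof should mirror the structure of Lemma~\ref{lem:foolce}: first determine the common value $b$ that $f$ attains on the diagonal pairs in $F$, then verify the distinguishing property for off-diagonal pairs. The central structural observation I intend to exploit is that the only leaf the two stars can share is $p$, the centers $u,v$ are never joined by any protocol-produced edge, and the vertices of $N$ never become star centers.

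For the first step, I would analyze the graph $G$ associated with a diagonal pair $(W, N\setminus W)$. Each $w \in N$ lies in exactly one of $W$ or $N\setminus W$, so it is a leaf of exactly one of $S_u(W)$ and $S_v(N\setminus W)$ and thus has degree $1$ in $G$. The graph therefore consists of the two stars glued only at the shared leaf $p$, which yields a tree on $|N|+3$ vertices (one can verify this by an edge count, or simply observe that $p$ is a cut vertex separating the two stars). Any tree is chordal, so $f(W, N\setminus W) = 1$ for every pair in $F$, and the target value is $b=1$.

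For the distinguishing step, I would pick two distinct pairs $(W, N\setminus W), (W', N\setminus W') \in F$. Since $W \neq W'$, the symmetric difference $W \triangle W'$ is nonempty; by symmetry assume some $w \in W \setminus W'$. I would then examine $G$ for $f(W, N\setminus W')$. Because $w \in W = V_1$ the edge $\{u,w\}$ lies in $S_u(W)$, and because $w \in N\setminus W' = V_2$ the edge $\{v,w\}$ lies in $S_v(N\setminus W')$. Together with the always-present edges $\{u,p\}$ and $\{v,p\}$, this produces the $4$-cycle $u - w - v - p - u$ in $G$. To finish I have to argue that this $C_4$ is induced: neither $\{u,v\}$ nor $\{w,p\}$ can occur as an edge, since $u$ and $v$ are star centers (never leaves of either star) and $p$ is a leaf whose only neighbors in the construction are the two centers $u$ and $v$. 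Hence $G$ contains an induced $C_4$, is not chordal, and $f(W, N\setminus W') = 0 \neq b$, establishing the fooling property.

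I do not anticipate a real technical obstacle here; the only care-point is verifying that the candidate $4$-cycle I exhibit is genuinely induced, which reduces to the two non-edge observations about $\{u,v\}$ and $\{w,p\}$ noted above and which follow directly from the definition of $S_u(\cdot), S_v(\cdot)$.
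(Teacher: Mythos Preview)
Your proposal is correct and follows essentially the same approach as the paper's own proof: show that each diagonal pair yields a tree (hence chordal, so $b=1$), then for distinct pairs pick $w\in W\setminus W'$ and exhibit the $4$-cycle $u$--$w$--$v$--$p$--$u$ in the graph corresponding to $(W,N\setminus W')$. You are in fact slightly more careful than the paper, which asserts the cycle is induced without explicitly checking the two non-edges $\{u,v\}$ and $\{w,p\}$.
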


The size of $F$ is $2^n$, implying by Lemma \ref{lem:fool} that the protocol for $f$ needs at least $n$ bits of communication. The following results from a similar argument to that of the proof of Theorem \ref{thm:ce}. We omit an explicit proof.

\begin{theorem}
 A streaming kernelization algorithm for \MFIk requires at least $(n-3)/2t$ bits of local memory for instances with $n$ vertices if it is allowed to make $t$ passes.
\end{theorem}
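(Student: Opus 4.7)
My plan is to mirror the argument used for \CEk in Theorem \ref{thm:ce}, but first I would have to verify Lemma \ref{lem:foolmfi} since it is only stated, and this is where the main content of the new proof lies. The idea is that the ``diagonal'' pairs $(W, N\setminus W)$ give chordal graphs (indeed, trees), while any off-diagonal swap produces an induced $4$-cycle through $p$.

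First, I would check condition (1) of the fooling set. For $(W, N\setminus W)\in F$, the edge set $S_u(W)\cup S_v(N\setminus W)$ is a double star $u{-}p{-}v$ with the remaining vertices hanging as leaves off exactly one of $u,v$ (since $W$ and $N\setminus W$ partition $N$); hence $G$ is a tree and therefore chordal, so $f(W,N\setminus W)=1=:b$.

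Next, I would verify condition (2). Take distinct pairs $(W,N\setminus W),(W',N\setminus W')\in F$, so $W\neq W'$ and WLOG there exists $w\in W\setminus W'$. In the graph determined by $(W, N\setminus W')$, the vertex $u$ is adjacent to $w$ (as $w\in W$) and to $p$, while $v$ is adjacent to $w$ (as $w\in N\setminus W'$) and to $p$. There is no edge $\{u,v\}$ and no edge $\{w,p\}$ in $S_u(W)\cup S_v(N\setminus W')$, so $u{-}w{-}v{-}p{-}u$ is an induced $4$-cycle, and thus $f(W,N\setminus W')=0\neq b$. This gives a fooling set of size $|F|=2^n$, so by Lemma \ref{lem:fool} the two-party communication problem requires at least $n$ bits.

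Finally, I would repeat the reduction from the proof of Theorem \ref{thm:ce}. Suppose $A$ is a $t$-pass streaming kernelization for \MFIk. Players $P_1$ and $P_2$, given $V_1$ and $V_2$, agree on $N$ with $|N|=n$ and the extra vertices $p,u,v$, initialise $A$ with parameter $k=0$ on a vertex set of size $n+3$, and simulate each pass by having $P_1$ stream $S_u(V_1)$, send the memory state to $P_2$, who streams $S_v(V_2)$ and returns the memory state. After $t$ passes they request the kernel and decide $f(V_1,V_2)$ according to whether the returned instance is \yes (equivalently, whether the input is already chordal, since $k=0$). Each pass transfers the memory twice, so a memory bound of $s$ bits yields a protocol of cost $2ts$, and combining with the $n$-bit lower bound we get $s\geq n/(2t)$. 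Renaming: an instance constructed this way has $n+3$ vertices, so expressing the bound in terms of the true vertex count $n$ of the instance gives $s\geq (n-3)/(2t)$, as claimed.

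The only real obstacle is the chordality check in condition (2); I need chordality rather than just $C_4$-freeness if one uses the standard definition, but the induced $4$-cycle found above certifies non-chordality under either convention used in the paper, so the argument is clean.
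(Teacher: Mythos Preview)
Your proposal is correct and follows essentially the same route as the paper: you verify Lemma~\ref{lem:foolmfi} exactly as the paper does in its appendix (the diagonal pairs yield a tree, the off-diagonal swap with $w\in W\setminus W'$ produces the induced $4$-cycle $u{-}w{-}v{-}p{-}u$), and then reuse the simulation argument of Theorem~\ref{thm:ce} with $k=0$ and the vertex-count renaming $n\mapsto n-3$. There is nothing substantively different between your argument and the paper's.
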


\section{Conclusion} \label{sec:conc}
In this paper we have explored kernelization in a data streaming model. Our positive results include single pass kernels for \dHSk and \dSMk, and a 2-pass kernel for \EDSk. We provide a tool that can be used to quickly identify a number of parameterized graph problems for which a single pass kernel requires $m-\Oh(1)$ bits of local memory for instances with $m$ edges. Furthermore, we have shown lower bounds for the space complexity of multi-pass kernels for \CEk and \MFIk.

\bibliographystyle{abbrv}
\bibliography{references}

\newpage
\appendix
\section{Proofs omitted from Section \ref{sec:singlepass}}
\subsection{Proof for Theorem \ref{thm:dhs}}

\begin{lemma}\label{lem:dhs}
After processing any set~$\F_t$ of edges on the input stream the algorithm has a set~$\F'_t\subseteq \F_t$ such that any set~$S$ of size at most~$k$ is a hitting set for~$\F_t$ if and only if~$S$ is a hitting set for~$\F'_t$.
\end{lemma}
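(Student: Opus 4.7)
The plan is to prove the two implications separately, using induction on the stream step $t$ for the nontrivial direction. The forward direction is immediate: since $\F'_t \subseteq \F_t$, any $S$ that hits every set in $\F_t$ in particular hits every set in $\F'_t$. For the converse I would induct on $t$, with base case $\F'_0 = \F_0 = \emptyset$ trivial. In the inductive step, if the algorithm stores the newly arrived set $F$ then $\F'_{t+1} = \F'_t \cup \{F\}$ and $\F_{t+1} = \F_t \cup \{F\}$, so any size-$k$ hitting set for $\F'_{t+1}$ explicitly hits $F$ and, by the inductive hypothesis applied to $\F'_t$, every set of $\F_t$ as well.

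The interesting case is when the algorithm discards $F$, so $\F'_{t+1} = \F'_t$ while $\F_{t+1} = \F_t \cup \{F\}$. Let $S$ be any hitting set for $\F'_t$ with $|S| \leq k$; it suffices to show $S \cap F \neq \emptyset$. By the discard rule, there is some $C \subseteq F$ such that exactly $(d-|C|)!(k+1)^{d-|C|}$ sets in $\F'_t$ contain $C$; call this subfamily $\mathcal{G}$. The trimmed family $\{G \setminus C : G \in \mathcal{G}\}$ consists of distinct sets of size at most $d - |C|$, and its cardinality exceeds the Erdős--Ko--Rado sunflower threshold $(d-|C|)! \cdot k^{d-|C|}$ required to force a sunflower with $k+1$ petals. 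Lifting the resulting sunflower back through $\cdot \cup C$ yields $k+1$ sets $G_1, \ldots, G_{k+1} \in \mathcal{G}$ whose pairwise intersections lie in $C$ and whose tails $G_i \setminus C$ are pairwise disjoint.

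To conclude, $S$ must hit each $G_i$, but the $k+1$ pairwise disjoint tails $G_i \setminus C$ can collectively contain at most $|S| \leq k$ elements of $S$. Hence some $G_j$ is hit by $S$ only through the common part $C$, so $S \cap C \neq \emptyset$, and since $C \subseteq F$ this forces $S \cap F \neq \emptyset$. The main obstacle is aligning the sunflower parameters: one must verify that the invariant's threshold $(d-|C|)!(k+1)^{d-|C|}$ comfortably exceeds $(d-|C|)! \cdot k^{d-|C|}$, so that the classical sunflower lemma guarantees a sunflower with $k+1$ petals --- precisely the number needed for the pigeonhole argument to rule out an escape route for $S$ through the petals rather than the core.
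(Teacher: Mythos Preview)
Your overall strategy matches the paper's: induct on~$t$, handle the ``discard'' case by finding $k+1$ stored supersets of some $C\subseteq F$ that pairwise intersect only in~$C$, and pigeonhole. But the step where you invoke the sunflower lemma has a real gap.

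The Erd\H{o}s--Rado lemma applied to the trimmed family $\{G\setminus C : G\in\mathcal{G}\}$ gives you $k+1$ sets forming a sunflower with \emph{some} core~$Y$; it does not guarantee $Y=\emptyset$. Lifting back, the pairwise intersections are $C\cup Y$, not~$C$, and the ``tails'' $G_i\setminus C$ all still contain~$Y$ and are therefore not pairwise disjoint. Your pigeonhole then only yields $S\cap(C\cup Y)\neq\emptyset$, and since $Y$ need not lie inside~$F$ this does not imply $S\cap F\neq\emptyset$. Concretely, a hitting set could meet all $k+1$ sunflower sets through a single element of~$Y$ that is foreign to~$F$.

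What you are missing is exactly the piece of the algorithm you never used: the global invariant that \emph{every} set~$C'$ (not only subsets of~$F$) has at most $(d-|C'|)!(k+1)^{d-|C'|}$ supersets in~$\F'_t$. The paper exploits this to build the desired packing directly: take a maximal family $F_1,\ldots,F_\ell$ of supersets of~$C$ with $F_j\setminus C$ pairwise disjoint; by maximality every superset of~$C$ meets $Q=\bigcup_j(F_j\setminus C)$, and since each $C\cup\{u\}$ for $u\in Q$ has at most $(d-|C|-1)!(k+1)^{d-|C|-1}$ supersets (by the invariant, even when $u\notin F$), a double count forces $|Q|>(d-|C|)(k+1)$ and hence $\ell\geq k+2$. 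This is precisely a sunflower with core exactly~$C$, which is what your argument assumed but did not prove. The invariant is what prevents the core from growing beyond~$C$, and without it the sunflower lemma alone is not enough.
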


\begin{proof}
We prove the lemma by induction. Clearly, the lemma is true for~$\F_0=\F'_0=\emptyset$. Now, assume that the lemma holds for all~$t\leq i$ and consider Step~$i+1$ in which, say, a set~$F$ appears on the stream.
Clearly, if there is a~$k$-hitting set~$S$ for~$\F_{i+1}$ then~$S$ is also a~$k$-hitting set for~$\F'_{i+1}\subseteq\F'_i\cup\{F\}\subseteq \F_i\cup\{F\}=\F_{i+1}$. I.e., this direction holds independently of whether the algorithm decides to store~$F$.

The converse, i.e., that a~$k$-hitting set for~$\F'_{i+1}$ is also a~$k$-hitting set for~$\F_{i+1}$, could only fail if the algorithm decided not to put~$F$ into~$\F'_{i+1}$; otherwise, such a~$k$-hitting set~$S$ would intersect~$F$ and all sets in~$\F'_i$, with the latter implying (by induction) that it intersects all sets in~$\F_i$. Assume that~$F\notin\F'_{i+1}$ which implies that the algorithm discovered a set~$C\subseteq F$ such that
\[
(d-|C|)!\cdot (k+1)^{d-|C|} 
\]
sets in~$\F'_i$ are supersets of~$C$. By the ordering of considered subsets~$C$ of~$F$ we know that for all~$C'\subseteq F$ of larger size there are strictly less than~$(d-|C'|)!\cdot (k+1)^{d-|C'|}$ sets containing~$C'$. Note that if~$C$ is contained in~$\F'_i$ then this already enforces that any hitting set for~$\F'_i$ also hits~$F$, so w.l.o.g.\ we assume that all sets are strict supersets of~$C$.

Let us consider the effect that adding~$F$ would have on~$\F'_i$, i.e., consider~$\hat{\F'_i}:=\F'_i\cup\{F\}$. By the previous considerations for~$\F'_i$ we conclude that in~$\hat{\F'_i}$ there are \emph{more than}~$(d-|C|)!\cdot (k+1)^{d-|C|}$ that contain~$C$ (since~$F$ also contains it). For all larger sets~$C'\subseteq F$ we reach a count of \emph{at most}~$(d-|C|)!\cdot (k+1)^{d-|C|}$. Crucially, this is where our invariant comes in, \emph{for all sets~$C'$ that are not subsets of~$F$} the counts are not increased when going from~$\F'_i$ to~$\hat{\F'_i}$, so there are also \emph{at most}~$(d-|C'|)!\cdot (k+1)^{d-|C'|}$ sets containing any~$C'\nsubseteq F$.

Now, for analysis, consider any maximal packing~$P=\{F_1,\ldots,F_\ell\}\subseteq \hat{\F'_i}$ of supersets of~$C$ such that the sets~$F_1\setminus C,\ldots,F_\ell\setminus C$ are pairwise disjoint (i.e., the sets pairwise overlap exactly in~$C$). This implies that all further supersets of~$C$ in~$\hat{\F'_i}$ must overlap some~$F_j\setminus C$. Let~$Q=\bigcup(F_j\setminus C)$ and note that the size of~$Q$ is at most~$\ell\cdot(d-|C|)$ with equality if all~$F_j$ have size~$d$. For any~$u\in Q$ we can consider~$C'=C\cup\{u\}$ and obtain that strictly less than
\[
(d-|C'|)!\cdot (k+1)^{d-|C'|}=(d-|C|-1)!\cdot (k+1)^{d-|C|-1}
\]
sets contain both~$C$ and~$u$. Since exactly~$(d-|C|)!\cdot (k+1)^{d-|C|}$ contain~$C$ as a strict subset (i.e., each contains at least one more element~$u'$), we get that
\begin{align*}
(d-|C|)!\cdot (k+1)^{d-|C|} < |Q|\cdot (d-|C|-1)!\cdot (k+1)^{d-|C|-1}
\end{align*}
which implies~$|Q|>(d-|C|)\cdot (k+1)$. Thus,~$\ell>k+1$, i.e.,~$\ell\geq k+2$. Now, we return from~$\hat{\F'_i}$ to~$\F'_i$ and note that even without having~$F\in\F'_i$ at least~$\ell-1\geq k+1$ of the sets~$F_1,\ldots,F_\ell$ are in~$\F'_i$. (We do not make any assumption about presence of~$F$ among these sets.) For ease of presentation let us rename~$k+1$ of those sets to~$F_1,\ldots,F_{k+1}\in\F'_i$.

Assume that~$\F'_i$ has a~$k$-hitting set, then by the induction hypothesis, there is also a~$k$-hitting set~$S$ for~$\F_i$. Since~$\F'_i\subseteq \F_i$ this set~$S$ must also be a hitting set for~$F_1,\ldots,F_\ell\in\F'_i\subseteq \F_i$. Since~$\ell\geq k+1$ some element~$s\in S$ must intersect at least two of the sets~$F_j,F_{j'}$, but then it also intersects the set~$C=F_j\cap F_{j'}$. Thus,~$S$ intersects also~$F\supseteq C$, implying that it is a~$k$-hitting set for~$\F'_{i+1}$ as claimed. This completes the inductive argument, and the proof.
\qed \end{proof}

Using the lemma, it is now straightforward to prove that the described algorithm is a streaming kernelization for \dHSk.

\begin{proof}
Correctness follows from Lemma \ref{lem:dhs}. As previously observed the algorithm stores at most $\Oh(k^d)$ sets at any
time during the computation. The elements of a set can be stored using $d \log |U|$ bits, i.e. $\Oh(k^d \log |U|)$ bits are used in total.
After the input stream has been processed the elements in $\F'_t$ can be relabeled such that they can be stored using $\Oh(\log k^d) = \Oh(\log k$) bits, i.e. an equivalent instance of size $\Oh(k^d \log k)$ bits is returned.
 
The algorithm iterates over at most $2^d$ subsets of the new set in each step. For each subset $C$ the number of sets in $\F'_t$ that are a superset of $C$ can be counted in $\Oh(|F'_t|) = \Oh(k^d)$ time, i.e. the algorithms spends $\Oh(2^d k^d) = \Oh(k^d)$ time in each step. \qed
\end{proof}

\subsection{Proof for Corollary \ref{cor:dhs}}

\begin{proof}
 We can find the corresponding node for a set $C$ by traversing $\mathbb{T}$ as follows. Let $n_0$ denote the root and let 
$C = \{c_1, \ldots, c_{|C|}\}$ such that $c_i < c_{i+1}$ for $1 \leq i < |C|$. Then node $n_{i+1}$ is assigned $c_{i+1}$ and is a 
child of node $n_i$. Finally, $n_{|C|}$ is the node corresponding to $C$ in which the number of sets in $\F'_t$ that contain a
superset of $C$ is stored. Each node in the $\mathbb{T}$ has at most $d \cdot |\F'_t|$ children. We can look up the child that is 
assigned $e_i$ in $\Oh(\log \F'_t)$ time by using the binary search tree. This step is performed $|C| \leq d$ times for each 
$C \subseteq F$, i.e. $\Oh(d 2^d  \log |F'_t|) = \Oh(\log k^d) = \Oh(\log k)$ time is spent in each step. The case that there is no node for $C$ in $\mathbb{T}$, i.e. there is no $F' \in \F'_t$ such that $C \subseteq F'$, can be identified similarly.

If the algorithm decides that $F \in \F'_{i+1}$ then the number stored in the node corresponding to $C$ is increased by 1 for 
each $C \subseteq F$ in increasing order of cardinality. In the case that there is no node for $C \subseteq F$ a new one will be
inserted at the appropriate place in $\mathbb{T}$, i.e. by finding the node in $\mathbb{T}$ corresponding to $C' = \{c_1, \ldots, c_{|C|-1}\}$ and inserting a new child for element $c_{|C|}$. Again, updating takes $\Oh(d \log F'_t)$ time for each $C \subseteq F$, i.e. $\Oh(\log k)$ time in each step.

Each set $F \in F'_t$ has at most $2^{|F|}$ subsets, each of which has a corresponding node in the tree. Then there are at most
$|\F'_t| \cdot 2^d$ nodes in the case that none of these subsets overlap. Instead of $|\F'_t| \cdot d \log |U|$, the algorithm now uses $|\F'_t| \cdot 2^d \log |U| = \Oh(k^d \log |U|)$ space. \qed
\end{proof}

\subsection{Proof for Theorem \ref{thm:dsm}}

\begin{lemma}
 After processing any set $\F_t$ of edges on the input stream the algorithm has a set $F'_t \subseteq \F_t$ such that there is a $k$-set matching $M$ for $\F_t$ if and only if there is a $k$-set matching $M'$ for $\F'_t$.
\end{lemma}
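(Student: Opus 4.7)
The plan is to mirror the induction on~$t$ from the proof of Lemma~\ref{lem:dhs}, replacing the hitting-set exchange by a matching-style exchange. The base case~$\F_0 = \F'_0 = \emptyset$ is trivial, and the direction ``a~$k$-matching in~$\F'_{i+1} \subseteq \F_{i+1}$ is also one in~$\F_{i+1}$'' is immediate. The work is to show, assuming the induction hypothesis for~$(\F_i,\F'_i)$, that any~$k$-matching~$M$ in~$\F_{i+1} = \F_i \cup \{F\}$ yields a~$k$-matching in~$\F'_{i+1}$, where~$F$ is the set arriving in step~$i+1$.

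The combinatorial core is the same sunflower-packing argument as in Lemma~\ref{lem:dhs}, with the matching threshold~$(d-|C|)!\cdot(d(k-1)+1)^{d-|C|}$ in place of the hitting-set one. Whenever the algorithm declines to store a set because some~$C$ at this threshold is discovered, the identical double-counting over a maximal petal-disjoint packing of supersets of~$C$ inside the hypothetical~$\F'_i \cup \{F\}$ produces a packing of size at least~$d(k-1)+2$, so --- after dropping~$F$ if it lies in the packing --- at least~$d(k-1)+1$ pairwise petal-disjoint supersets of~$C$ remain in~$\F'_i$. This bound is matched to the matching setting: any vertex set of size at most~$d(k-1)$ meets at most~$d(k-1)$ of these pairwise disjoint petals, so at least one petal avoids it. The case~$F \notin M$ then follows from the induction hypothesis applied to~$\F_i$, and the case where~$F \in M$ and~$F$ was discarded is handled by applying the packing to~$F$ itself: swap~$F$ for a superset of~$C \subseteq F$ whose petal avoids~$V(M \setminus \{F\})$ (of size at most~$d(k-1)$), producing a~$k$-matching in~$\F_i$ to which the induction hypothesis applies.

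The remaining case is~$F \in M$ with~$F$ stored; this is the main obstacle, because the induction hypothesis for~$(\F_i,\F'_i)$ can be vacuous (if~$\F_i$ itself has no~$k$-matching, so the matching in~$\F_{i+1}$ really needs~$F$), and one has to produce a~$(k-1)$-matching inside~$\F'_i$ that is disjoint from~$F$ to combine with~$F$. I would handle this by substituting the sets in~$M \setminus \{F\} = \{S_1, \ldots, S_{k-1}\}$ in stream order: set~$T_j := S_j$ whenever~$S_j \in \F'_i$, and otherwise draw~$T_j$ from the sunflower produced when the discarded~$S_j$ appeared, choosing it so that its petal avoids the current avoid-set~$V(\{T_1, \ldots, T_{j-1}, S_{j+1}, \ldots, S_{k-1}, F\})$, which has size at most~$d(k-1)$. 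The critical verification is that the core~$C_j \subseteq S_j$ of the chosen sunflower stays disjoint from this avoid-set: it is disjoint from every other~$S_{j'}$ and from~$F$ because~$M$ is a matching containing all of them, and disjoint from each previously substituted~$T_{j'}$ with~$j' < j$ because~$T_{j'}$ was itself chosen to avoid~$S_j$ in its own avoid-set. Then~$\{T_1, \ldots, T_{k-1}\} \cup \{F\}$ is a~$k$-matching in~$\F'_i \cup \{F\} = \F'_{i+1}$, closing the induction.
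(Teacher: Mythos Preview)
Your proof is correct and, for the cases the paper actually argues, follows the same inductive template and the same sunflower-packing count (with the threshold $d(k-1)+1$ replacing $k+1$). Where you diverge is in taking the case ``$F$ stored and $F\in M$'' seriously. The paper dismisses it in one clause (``could only fail if the algorithm decided not to put~$F$ into~$\F'_{i+1}$''), but as you point out, the induction hypothesis only speaks about $k$-matchings in $(\F_i,\F'_i)$ and does not by itself hand you a $(k-1)$-matching in $\F'_i$ that avoids $F$. Your iterative petal-substitution argument is a correct and clean way to close this: the verification that each core $C_j\subseteq S_j$ stays disjoint from the evolving avoid-set (via $M$ being a matching and via the earlier $T_{j'}$ having been chosen to avoid $S_j$) is exactly the point that needs checking, and it goes through.

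One structural remark: the substitution procedure you use for the third case is already strong enough to prove the whole lemma directly, without induction on~$t$. Given any $k$-matching $M=\{S_1,\dots,S_k\}\subseteq\F_t$, process the $S_j$ in any fixed order; whenever $S_j\notin\F'_t$ it was discarded at some earlier step, at which point the packing argument yields $\ge d(k-1)+1$ supersets of some $C_j\subseteq S_j$ with pairwise disjoint petals inside $\F'$ at that step (hence inside $\F'_t$, since stored sets are never removed), and you pick $T_j$ exactly as you do now with avoid-set $V(\{T_1,\dots,T_{j-1},S_{j+1},\dots,S_k\})$ of size at most $d(k-1)$. This collapses your three cases into one and sidesteps the awkward interaction between the inductive hypothesis and the storage decision that you had to work around.
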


\begin{proof}
 We prove the lemma with an argument that is similar to the proof of Lemma \ref{lem:dhs} and point out the key differences. 
 Clearly, if there is a $k$-set matching $M$ for $\F'_{i+1}$ then $M$ is also a $k$-set matching for $\F_{i+1} = \F_i \cup \{F\} \supseteq \F'_i \cup \{F\} \supseteq \F'_{i+1}$.
 
 The converse, i.e., that a $k$-set matching $M$ for $\F_{i+1}$ implies the existence of a $k$-set matching $M'$ for $\F'_{i+1}$, could only fail if the algorithm decided not to put $F$ into $\F'_{i+1}$;
 if $F$ is not required for $M'$ then it certainly does not obstruct such a matching. 
 Then let us assume $F \notin \F'_{i+1}$ which implies that the algorithm discovered a set $C \subseteq F$ for which there are
 at least $d(k-1)+2$ supersets $F_1, \ldots, F_\ell$, such that their pairwise intersection is exactly $C$, and one of these sets is $F$.
 Then there are at least $d(k-1)+1$ such sets in $\F'_i$
 
 Assume that $F_{i+1}$ has a $k$-set matching $M$. If $F \notin M$, then $M$ is a matching for $\F_i$ and by induction hypothesis there is a $k$-set matching $M'$ for $\F'_{i} \subseteq \F_{i+1}$. In the case that $F \in M$, then there is at least one set among $F_1, \ldots, F_\ell$ that can take the role of $F$ in a matching $M'$. Each of these sets has at least one element that does not intersect $C$. Then a matching of size $k-1$ that does not contain a superset of $C$ can contain at most $d(k-1)$ of these sets. This leaves at least one set $F'$ that does not intersect any set in the $k-1$ matching. Then $F$ can be replaced by $F'$ in $M'$, i.e. there is a $k$-set matching in $\F'_i$ and therefore also in $\F'_{i+1}$. \qed \end{proof}
 
 The proof of Theorem \ref{thm:dsm} follows from the lemma in a similar way to the proof for \dHSk

\section{Proofs omitted from Section \ref{sec:singlepass:bounds}}
\subsection{Proof for Theorem \ref{thm:eds}}

\begin{proof}
Consider the following category of instances. Let $S \subseteq [n]$ and define ~$E(S)=\{\{a,s_i\}\mid i\in S\}$. Let $A$ be a single pass streaming kernelization algorithm for \EDSk and let $A$ receive budget $k=1$ and partial input stream $\langle \{a, s_{i_1}\}\ldots \{a, s_{i_{|S|}}\} \halt$,
where~$S=\{i_1,\ldots,i_{|S|}\}$, i.e., the stream contains exactly~$E(S)$ (the order therein is immaterial for our argument).
If $A$ uses less than $n$ bits of local memory, then by the pigeonhole principle there must be a pair $S', S'' \subseteq [n]$ such that $S' \neq S''$ where $\langle S' \halt$ and $\langle S'' \halt$ result in the same memory state. 

Now $A$ must return the same problem kernel for $\langle S', e \rangle$ and $\langle S'', e \rangle$ for every edge $e$ since its behavior always depends on its memory and the rest of the input stream. Let us assume w.l.o.g.\ that $S' \setminus S'' \neq \emptyset$ and let~$i\in S'\setminus S''$. It follows that~$\{a,s_i\}$ is an edge dominating set for the instance with edge set~$E(S')\cup\{\{b,s_i\}\}$, making this instance \yes. The instance with edge~$E(S'')\cup\{\{b,s_i\}\}$, however, has two connected components and thus is \no for budget~$k=1$. Thus,~$A$ cannot answer correctly for both instances; contradiction. Thus, any streaming kernelization must use at least~$n$ bits for this type of instance with~$n+1$ edges.

For every $m \in \mathbb{N}$, if we set $n = m-1$ then there is an instance with $m$ edges for which $A$ requires at least $m-1$ bits. Therefore, any single pass streaming kernelization algorithm for \EDSk requires at least $m-1$ bits of local memory. \qed \end{proof}
\subsection{Proof for Lemma \ref{lem:gen}}
\begin{proof}
 Let $A$ be a single pass streaming kernelization algorithm for $Q$ using less than $m$ bits. We consider a worst case scenario for the ordering in which edges appear in the input stream. If a streaming kernel requires some minimum amount of memory for this ordering, then it requires at least as much memory when the edges appear in some arbitrary order. 
 
 Let us consider instances with at most $m+c$ edges that have an input stream of the type $\langle F, R \rangle$. 
 That is, a subset of edges $F \subseteq E$ appears first, followed by some set $R$ of size $c$. If $A$ uses less than $m$ bits of local memory, then by the pigeonhole principle there must be a pair of subsets $F', F'' \subseteq E$ such that $F' \neq F''$, where $\langle F' \halt$ and $\langle F'' \halt$ result in the same memory state.
 
 Now $A$ must return the same problem kernel for $\langle F', R \rangle$ and $\langle F'', R \rangle$ for every $\resume R \rangle$. Let us assume w.l.o.g. that $F' \setminus F'' \neq \emptyset$ and let~$e_i \in F' \setminus F''$. Thus, for the corresponding set $R_i=R(e_i)$ we have $(G[F' \cup R_i], k) \in Q$ and $(G[F'' \cup R_i],k) \notin Q$. We conclude that $A$ is not a correct kernelization algorithm since it cannot answer both instances correctly if being in the same state after $\langle F' \halt$ and $\langle F'' \halt$. Therefore, any single pass streaming kernelization algorithm for $Q$ requires at least $m$ bits of local memory for instances with at most $m+c$ edges.
\qed \end{proof}

\section{Construction of Stream Obstructing Graphs} \label{app:list}
\paragraph{Edge Dominating Set.} The construction for a 1-1-stream obstructing graph for \EDSk with an arbitrary number of edges is implicitly used in the proof of Theorem \ref{thm:eds}.

\paragraph{Cluster Editing.} We construct a 1-0-stream obstructing graph $G = (V, E)$ for \CEk with $m$ edges in the following way.
Let $E$ be a set of pairwise disjoint edges such that $|E| = m$ and let $V = V(E) \cup \{w\}$ where $w \notin V(E)$. The following lemma shows that this construction suffices.

\begin{lemma}
$G$ is a 1-0-stream obstructing graph for \CEk. 
\end{lemma}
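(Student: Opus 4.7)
The plan is to verify the definition of a $c$-$k$-stream obstructing graph (in its symmetric form, allowed by the paragraph after the definition) with $c=1$ and $k=0$. So for each edge $e_i = \{u_i, v_i\} \in E$, I need to exhibit a single non-edge $R_i$ such that for every $F \subseteq E$, the graph $G[F \cup R_i]$ is a disjoint union of cliques if and only if $e_i \notin F$. Recall that Cluster Editing with budget $k=0$ is a \yes-instance precisely when the input graph is already a disjoint union of cliques, so the problem reduces entirely to checking this structural property.

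The key idea is to use the extra isolated vertex $w$ as an ``indicator attachment.'' For each $i$, I would choose $R_i = \bigl\{\{w, u_i\}\bigr\}$. Since $w$ is isolated in $G$ and every edge of $E$ lies inside $V(E)$, we have $\{w, u_i\} \in \binom{V}{2}\setminus E$, so $R_i$ is a legal choice of size $c=1$.

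Next I would carry out the two directions of the equivalence. If $e_i \notin F$, then in $G[F \cup R_i]$ the vertex $u_i$ has only one incident edge $\{w,u_i\}$, and $v_i$ is isolated (because $E$ is a matching, so no other edge of $F$ touches $u_i$ or $v_i$). Hence the component containing $w$ is a $K_2$, and every other edge in $F$ still forms its own $K_2$ component by disjointness, so $G[F \cup R_i]$ is a disjoint union of cliques. Conversely, if $e_i \in F$, then $\{u_i,v_i\}$ and $\{w,u_i\}$ both belong to $G[F \cup R_i]$, putting $w$, $u_i$, $v_i$ in one component; but $\{w,v_i\} \notin F \cup R_i$, so this component induces a $P_3$, which is not a clique, and therefore $G[F \cup R_i]$ is not a disjoint union of cliques.

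There is no real obstacle here: the matching structure of $E$ ensures that adding or removing any single $e_i$ affects only the component of $u_i$ and $v_i$, and the attachment through $w$ turns the presence of $e_i$ into an induced $P_3$. The main thing to be careful about is invoking the symmetric variant of the definition (since my construction yields the ``iff $e_i \notin F$'' direction rather than the ``iff $e_i \in F$'' direction) and, when concluding, pointing out that this gives stream obstructing graphs of arbitrary size $m$, so Theorem~\ref{thm:gen} applies and yields the claimed $|E| - \Oh(1)$ lower bound for \CEk.
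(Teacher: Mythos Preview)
Your proposal is correct and follows essentially the same approach as the paper: you attach $w$ to one endpoint of $e_i$ (the paper picks $v_i$, you pick $u_i$, which is immaterial) and argue that this creates an induced $P_3$ precisely when $e_i\in F$, invoking the symmetric form of the definition since $k=0$ forces the graph to already be a cluster graph. One tiny imprecision: when $e_i\notin F$, the vertex $v_i$ is not ``isolated'' in $G[F\cup R_i]$ but rather absent from it (recall $G[E']=G(V(E'),E')$); this does not affect the argument.
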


\begin{proof}
For every edge $e = \{u, v\} \in E$, choose $R(e) = \{\{v, w\}\}$, i.e. $|R(e)| = c = 1$. For every subset $F \subseteq E$ we show that $G[F \cup R(e)]$ is a cluster graph if and only if $e \notin F$ since we have a budget of 0. Suppose that $e \in F$. Then $G[F \cup R(e)]$ has an induced $P_3$ on vertices $u, v, w$, i.e. it is not a cluster graph. In the other case suppose that $e \notin F$. Then $G[F \cup R(e)]$ is a set of $|F|+1$ disjoint edges since $\{v, w\}$ does not intersect with any edge in $F$, i.e. it is a cluster graph. This completes the proof. \qed
\end{proof}

We observe that $G$ is also a 1-0-stream obstructing graph for \CEDk and \CVDk since a budget of 0 forces any graph in a \yes instance to be a cluster graph.

If we choose $R(e) = \{\{u, w\}, \{v, w\}\}$, then we can show that $G$ is a 2-0-stream obstructing graph for \BERk, \FBVSk, \OCTk, \TEDk and \TVDk since any induced triangle is a forbidden structure in any instance with budget 0 for these problems. There is a triangle in instances $G[F\cup R(e)]$ if and only if $e \in F$, namely on vertices $u, v, w$. Furthermore, $G$ is a 2-1-stream obstructing graph for \TPk since there is a single triangle in the graph if and only if $e \in F$.

\paragraph{Cograph Vertex Deletion.} We construct a 2-0-stream obstructing graph $G = (V,E)$ for \CoVDk with $m$ edges in the following way. Let $E$ be a set of pairwise disjoint edges such that $|E| = m$ and let $V = V(E) \cup \{p, w\}$ where $p, w \notin V(E)$. For every $e = \{u, v\} \in E$ we choose $R(e) = \{\{p, u\},\{v,w\}\}$, i.e. $|R(e)| = c = 2$. Similar to the stream obstructing graph for \CEk, a budget of 0 forces any graph in a \yes instance to be a cograph. This only holds for graphs $G[F \cup R(e)]$ for subsets $F \subseteq E$ if $e \notin F$ since otherwise the graph has an induced $P_4$ on vertices $p, u, v, w$.

If we choose $R(e) = \{\{p, u\}, \{v, w\}, \{p, w\}\}$, then we can show that $G$ is a 3-0-stream obstructing graph $G = (V, E)$ for \MFIk since there is a $C_4$ in instances $G[F \cup R(e)]$ if and only if $e \in F$, namely on vertices $p, u, v, w$.

\paragraph{s-Star Packing.}
We construct a $(s-1)$-1-stream obstructing graph $G = (V,E)$ for \sSPk with $m$ edges in the following way. Let $E$ be a set of pairwise disjoint edges such that $|E| = m$ and let $V = V(E) \cup \{w_1, \ldots, w_{s-1}\}$ where $w_1, \ldots w_{s-1} \notin V(E)$. We show that this construction suffices.

\begin{lemma}
 $G$ is a $(s-1)$-1-stream obstructing graph for \sSPk.
\end{lemma}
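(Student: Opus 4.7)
The plan is to exhibit, for each edge $e=\{u,v\}\in E$, an explicit set $R(e)\subseteq \binom{V}{2}\setminus E$ of size exactly $s-1$ such that with budget $k=1$ the induced graph $G[F\cup R(e)]$ admits an $s$-star if and only if $e\in F$. The natural choice is to attach every auxiliary vertex to a single designated endpoint of $e$, i.e.\ to set
\[
R(e):=\{\{v,w_1\},\{v,w_2\},\ldots,\{v,w_{s-1}\}\}.
\]
This clearly has the prescribed size $s-1$ and is disjoint from $E$, because the $w_i$'s lie outside $V(E)$ and the edges of $E$ are pairwise disjoint.

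For the verification, I would analyse the two cases. In $G[F\cup R(e)]$ the vertex $v$ always has the $s-1$ neighbours $w_1,\ldots,w_{s-1}$. If $e\in F$, then $v$ additionally gains $u$ as a neighbour and reaches degree exactly $s$, so it becomes the centre of an $s$-star with leaves $u,w_1,\ldots,w_{s-1}$ and one such star can be packed; hence $(G[F\cup R(e)],1)\in$~\sSPk. If instead $e\notin F$, every vertex of $V(F)\setminus\{v\}$ has degree at most $1$ (being contained in at most one of the pairwise-disjoint edges of $F$), every $w_i$ has degree $1$, and $v$ itself still has degree only $s-1$. The maximum degree of $G[F\cup R(e)]$ is therefore $s-1<s$, no vertex can serve as the centre of an $s$-star, and the instance is \no.

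The argument is essentially elementary; the only point worth flagging is that the definition of a stream obstructing graph allows $R(e)$ to depend on $e$, so reusing the same auxiliary vertices $w_1,\ldots,w_{s-1}$ across all edges of $E$ causes no conflict. I do not anticipate any genuine obstacle here. Combined with Theorem~\ref{thm:gen} this construction then gives the desired $|E|-(s-1)$ bit lower bound for single pass streaming kernels of \sSPk.
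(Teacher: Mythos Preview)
Your proposal is correct and matches the paper's own proof essentially line for line: the same choice $R(e)=\{\{v,w_1\},\ldots,\{v,w_{s-1}\}\}$ and the same two-case analysis. Your maximum-degree argument for the $e\notin F$ case is in fact a shade more explicit than the paper's, which simply observes that $G[F\cup R(e)]$ then consists of disjoint edges together with a $K_{1,s-1}$ centred at $v$.
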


\begin{proof}
For every edge $e = \{u, v\} \in E$, choose $R(e) = \{\{v,w_1\},\ldots\{v,w_{s-1}\}\}$, i.e. $|R(e)| = c = s-1$. For every subset $F \subseteq E$ we show that $G[F\cup R(e)]$ contains exactly one instance of $K_{1,s}$. Suppose that $e \in F$. Then $G[F\cup R(e)]$ contains a $K_{1,s}$ with center vertex $v$ and leaves $u, w_1, \ldots, w_{s-1}$. In the other case suppose that $e \notin F$. Then $G[F\cup R(e)]$ is a set of disjoint edges plus a $K_{1,s-1}$ on center vertex $v$ with leaves $w_1, \ldots, w_{s-1}$. \qed
\end{proof}

\paragraph{Bipartite Colorful Neighborhood.}
In a natural extension to graph streaming we assume that edges $e = \{u, w\}$ in a stream for bipartite graphs $G(U \cup W,E)$ are given such that $u \in U$ and $w \in W$. We construct a 1-1-stream obstructing graph $G = (U \cup W,E)$ for \BCNk with $m$ edges in the following way. Let $E$ be a set of pairwise disjoint edges such that $|E| = m$ and let $U \cup W = V(E) \cup \{v\}$ where $v \notin V(E)$ and $v \in W$. The following lemma shows that this construction suffices.

\begin{lemma}
 $G$ is a 1-1-stream obstructing graph for \BCNk.
\end{lemma}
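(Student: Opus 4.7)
The plan is to follow the template used for the other constructions in this appendix. For each edge $e = \{u, w\} \in E$ with $u \in U$ and $w \in W$, I would set $R(e) = \{\{u, v\}\}$, which is the unique bipartite-respecting edge between $u$ and the extra vertex $v \in W$. This immediately gives $|R(e)| = c = 1$ as required, and the edge lies in $\binom{V}{2} \setminus E$ since $v \notin V(E)$.

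The bulk of the proof then consists of showing that for every $F \subseteq E$, the instance $(G[F \cup R(e)], k=1)$ is a \yes-instance for \BCNk iff $e \in F$. The forward direction relies on the observation that when $e \in F$, the vertex $u$ has two neighbors in $G[F \cup R(e)]$, namely $w$ and $v$, which lie in different color classes under the intended coloring; this yields a neighborhood colorful enough to witness the property with budget $k=1$. The backward direction uses the pairwise disjointness of $E$: if $e \notin F$, then $u$ has only $v$ as its single neighbor, and every other vertex $u' \in U$ still has at most one neighbor in the resulting graph (its own $w'$, when $e' \in F$), so no $U$-vertex can achieve the required colorful neighborhood.

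The main obstacle is pinning down the exact colorful-neighborhood condition in a way that makes the biconditional sharp: the coloring must assign $v$ and the $w_i$ to sufficiently distinct classes so that attaching $\{u,v\}$ to $u$ via $R(e)$ pushes $u$ over the threshold exactly when $u$'s original $E$-edge is also present, and not otherwise. Once that is fixed, the case analysis over $F \subseteq E$ is routine, and the lemma follows directly from the definition of a $c$-$k$-stream obstructing graph with $c = 1$ and $k = 1$.
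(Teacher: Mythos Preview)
Your proposal is correct and follows essentially the same approach as the paper: the same choice $R(e)=\{\{u,v\}\}$ and the same case analysis exploiting that the edges in $E$ are pairwise disjoint, so a $U$-vertex gains a second neighbor in $W$ precisely when $e\in F$. Your hedging in the last paragraph about ``pinning down the exact colorful-neighborhood condition'' is unnecessary---the paper's criterion is simply that some $u\in U$ has at least two neighbors in $W$ (which can then trivially be two-colored), and your argument already establishes exactly that.
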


\begin{proof}
 For every edge $e = \{u, w\} \in E$, choose $R(e) = \{\{u, v\}\}$, i.e. $|R(e)| = c = 1$. For every subset $F \subseteq E$ we show that $G[F \cup R(e)]$ has a vertex in $U$ with two neighbors in $W$ if and only if $e \in F$. Suppose that $e \in F$. Then $G[F \cup R(e)]$ has $u \in U$ and edges $\{u, v\}$ and $\{u, w\}$, i.e. there is a vertex $u$ with a neighborhood in $W$ that can be two-colored. In the other case, suppose that $e \notin F$. Then vertices in $U$ in the graph $G[F\cup R(e)]$ have at most one neighbor.
 \qed
\end{proof}

\section{Proofs omitted from Section \ref{sec:2pass}}

\subsection{Proof for Lemma \ref{lem:vc}}

\begin{proof}
 We prove the lemma by induction. Clearly, the lemma is true for $A_0 = A'_0 = \emptyset$. Now, assume that the lemma holds for all $t \leq i$ and consider Step $i+1$ in which an edge $e$ appears on the stream. First, suppose that $A'_{i+1} > 4k^2+2k$. Vertices are incident to at most $2k+1$ edges in $A'_{i+1}$, i.e. $2k$ vertices can cover at most $2k(2k+1) = 4k^2+2k$ edges in $A'_{i+1}$. Therefore, there can be no solution and the algorithm can safely return a \no instance. 
 
 In the other case, let us assume that $|A'_{i+1}| \leq 4k^2+2k$. Clearly, if there is a $2k$-vertex cover $S$ for $A_{i+1}$, then $S$ is also a $2k$-vertex cover for $A'_{i+1} \subseteq A'_i \cup \{e\} \subseteq A_i \cup \{e\} = A_{i+1}$. 
 The converse, i.e. that a $2k$-vertex cover for $A'_{i+1}$ is also a $2k$-vertex cover for $A_{i+1}$ could
 only fail if the algorithm decided not to put $e$ in $A'_{i+1}$; otherwise, such a $2k$-vertex cover $S$ would cover $e$ and all edges in $A'_i$, with the latter implying (by induction) that $S$ covers all edges in $A_{i+1} = A_i \cup \{e\}$.
 
 Then let us assume $e \notin A'_{i+1}$ which implies that the algorithm discovered a vertex $v$ that is incident to $e$ and at
 least $2k+1$ edges in $A'_i$. Then any vertex cover $S$ of size at most $2k$ must contain $v$ in order to cover these edges, i.e.
 $S$ will cover $e$ in any case. Thus $S$ is also a $2k$-vertex cover for $A_{i+1}$. \qed \end{proof}
 
\subsection{Proof for Lemma \ref{lem:eds}}

\begin{proof}
 Suppose $S$ is an edge dominating set of size at most $k$ for $G[A' \cup B']$. Then $V(S)$ is a vertex cover of size at most $2k$  for $G[A' \cup B']$ and therefore also for $G[A']$. By Lemma \ref{lem:vc} we have that $V(S)$ is also a vertex cover for $G$. Therefore $S$ is also an edge dominating set for $G$ since the endpoints of edges in $S$ cover all edges in $G$ and $S \subseteq A' \cup B' \subseteq E$.
 
 For the converse, suppose that $S$ is an edge dominating set of size at most $k$ for $G$. Then $V(S)$ is a vertex cover of size at most $2k$ for $G[A']$. Each edge in $B'$ is incident with at least one vertex $v$ such that $2k+1$ edges in $A'$ are incident with $v$, i.e. $v$ must be part of a $V(S)$. Therefore $V(S)$ is also a vertex cover for $G[A' \cup B']$ since it also covers all edges in $B'$.
 
 Now let us verify that there is an edge dominating set $S'$ of size at most $k$ for $G[A' \cup B']$. We will show how to find $S'$ by considering edges $e = \{u, v\}$ of $S$. First, let $S' = \emptyset$. If $u, v \in V(A')$, then $e \in A' \cup B'$, i.e. add $e$ to $S'$ in order to cover neighboring edges of $e$.
 If $u, v \notin V(A')$, then $S'$ does not require $e$ since in this case every edge in $A' \cup B'$ is incident with neither $u$ nor $v$.
 In the remaining case we have w.l.o.g. $u \in V(A')$, $v \notin V(A')$, i.e. there are no edges in $A' \cup B'$ that
 are incident with $v$. Then $e$ can be substituted by any other edge $e'$ in $A' \cup B'$ that is incident with $u$, i.e. add $e'$ to $S'$. At the end $S'$ is an edge dominating set of size at most $k$ for $G[A' \cup B']$. \qed \end{proof}

\subsection{Proof for Theorem \ref{thm:2pass}}

\begin{proof}
 Correctness of the algorithm follows from Lemma \ref{lem:eds}. After the first pass there is a set $A' \subseteq E$ with at most
 $2k(2k+1)$ edges. Let $H$ be the set of vertices in $V(A')$ of degree at least $2k+1$ and let $L$ be the set of vertices of degree at most $2k$, i.e. $|H| = \Oh(k)$ and $|L| = \Oh(k^2)$.
 
 After the second pass there are $\Oh(k^2)$ edges in $A' \cup B'$ that are incident with two vertices in $H$. None of the edges that are incident with two vertices in $L$ were discarded in the first pass, i.e. there are $\Oh(k^2)$ such edges in $A' \cup B'$. Finally, there are at most $\Oh(k^3)$ edges that are incident with a vertex in $H$ and a vertex in $L$ since every vertex in $L$ has at most $2k$ neighbors in $H$.
 
 An edge can be stored using $\Oh(\log n)$ bits and  $|A' \cup B'| = \Oh(k^3)$. The algorithm stores a subset of $A' \cup B'$ at any time during the execution which requires $\Oh(k^3 \log n)$ bits. Therefore, the algorithm uses $\Oh(k^3 \log n)$ bits of memory in each step. After both passes have been processed the vertices of the equivalent instance can be relabeled such that they can be stored using $\Oh(\log k)$ bits, i.e. the size of the instance is now $\Oh(k^3 \log k)$. 
 
 Counting the size of $A'_t$ and the number of edges in $A'_t$ that are incident with a certain vertex $v$ can be performed in $\Oh(A'_t) = \Oh(k^2)$ time. Similarly, verifying if an edge is in $A'_t$ and verifying if an edge in $A'_t$ is incident with a certain vertex $v$ can be done in $\Oh(k^2)$ time.
  \qed \end{proof}
  
\section{Proofs omitted from Section \ref{sec:multi}}
\subsection{Proof for Lemma \ref{lem:foolmfi}}

\begin{proof}
For every $W \subseteq N$ we have $G = (N \cup \{p, u, v\}, S_u(W) \cup S_v(N\setminus W))$ which is cycle free since $S_u(W) \cup S_v(N \setminus W)$ forms a tree rooted at $p$ with leaves $N$. Therefore $G$ is chordal, i.e. $f(W, N \setminus W) = 1$ for every $(W, N \setminus W) \in F$. Now let us consider pairs $(W, N \setminus W), (W' \setminus W') \in F$. We must show that either $f(W, N \setminus W') = 0$ or $f(W', N \setminus W) = 0$. Clearly $W \neq W'$ since $(W, N \setminus W) \neq (W', N \setminus W')$. Let us assume w.l.o.g. that $W \setminus W' \neq \emptyset$, i.e. there is a vertex $w \in W \setminus W'$. Then $\{u, w\}, \{u, p\} \in S_u(W)$ since $w \in W$. Furthermore $\{v, w\}, \{v, p\} \in S_v(N\setminus W')$ since $w \in N \setminus W'$ because $w \notin W'$. Thus $G = (N \cup \{p, u, v\}, S_u(W) \cup S_v(W'))$ contains an induced cycle on 4 vertices and is not chordal, i.e., $f(W, N \setminus W') = 0$ and the lemma holds.
\end{proof}

\end{document}